\documentclass[twocolumn,10pt]{IEEEtran}
\usepackage{braket}
\usepackage{dsfont}
%
%
%
%
%
\usepackage{pifont}

\input{def.tex}

\usepackage{dsfont}
\usepackage{minibox,physics}
\DeclareSymbolFont{matha}{OML}{txmi}{m}{it}
\DeclareMathSymbol{\varv}{\mathord}{matha}{118}
\usepackage{eurosym}
\usepackage{multicol}
\usepackage{algorithm,algorithmicx}
\usepackage{eurosym}

\makeatletter

\usepackage{tikz}
\usepackage{array}
\usepackage{booktabs,adjustbox}

\newcommand{\finalcells}[2]{%
	\begingroup\sbox0{\begin{minipage}{3cm}\raggedright#1\end{minipage}}%
	\sbox2{\begin{minipage}{3cm}\raggedright#2\end{minipage}}%
	\xdef\finalheight{\the\dimexpr\ht0+\dp0+\smallskipamount\relax}%
	\xdef\finalheightB{\the\dimexpr\ht2+\dp2+\smallskipamount\relax}%
	\ifdim\finalheightB>\finalheight
	\global\let\finalheight\finalheightB
	\fi\endgroup
	\begin{minipage}[t][\finalheight][t]{3cm}\raggedright#1\end{minipage}&
	\begin{minipage}[t][\finalheight][t]{3cm}\raggedright#2\end{minipage}}

\pagestyle{empty}

\IEEEoverridecommandlockouts
\begin{document}
	\title{Impact of Channel Aging on Reconfigurable Intelligent
		Surface Aided Massive MIMO Systems with Statistical CSI}
	\author{Anastasios Papazafeiropoulos, Ioannis Krikidis, Pandelis Kourtessis \thanks{A. Papazafeiropoulos is with the Communications and Intelligent Systems Research Group, University of Hertfordshire, Hatfield AL10 9AB, U. K., and with the SnT at the University of Luxembourg, Luxembourg. I. Krikidis is with the IRIDA Research Centre for Communication Technologies, Department of Electrical and Computer Engineering, University of Cyprus, Cyprus. P. Kourtessis is with the Communications and Intelligent Systems Research Group, University of Hertfordshire, Hatfield AL10 9AB, U. K. A. Papazafeiropoulos was supported  by the University of Hertfordshire's 5-year Vice Chancellor's Research Fellowship. Also, this work was co-funded by the European Regional Development Fund and the Republic of Cyprus through the Research and Innovation Foundation under the project INFRASTRUCTURES/1216/0017 (IRIDA). It has also received funding from the European Research Council (ERC) under the European Union’s Horizon 2020 research and innovation programme (Grant agreement No. 819819).
			E-mails: tapapazaf@gmail.com, krikidis.ioannis@ucy.ac.cy, p.kourtessis@herts.ac.uk.}}
	\maketitle\vspace{-1.7cm}
	\maketitle

	\begin{abstract}	
		The incorporation of reconfigurable intelligent surface (RIS) into massive multiple-input-multiple-output (mMIMO) systems can unleash the potential of next-generation networks by improving the performance of  user equipments (UEs) in service dead zones. However, their requirement for accurate channel state information (CSI) is critical, and especially,  applications with UE mobility  that induce channel aging make challenging the achievement of adequate quality of service. Hence, in this work, we investigate the impact of channel aging on the performance of RIS-assisted mMIMO systems under both spatial correlation and imperfect CSI conditions. Specifically, by accounting for channel aging during both uplink training and downlink data transmission phases, we first perform  minimum mean square error (MMSE) channel estimation to obtain the UE effective channels with low overhead similar to conventional systems without RIS. Next, we derive the downlink achievable sum spectral efficiency (SE) with  regularized zero-forcing (RZF) precoding in closed-form  being dependent only on large-scale statistics by using the deterministic equivalent (DE) analysis. Subsequently, we present the attractive optimization of the achievable sum SE with respect to the phase shifts and the total transmit power that can be performed every several coherence intervals due to the slow variation of the large-scale statistics. Numerical results validate the analytical expressions and demonstrate the performance while allowing the extraction of insightful design conclusions for common scenarios including UE mobility. In particular,  channel aging degrades the performance but its impact can be controlled by choosing appropriately the frame duration or by increasing the number of RIS elements.	\end{abstract}
	
	\begin{keywords}
		Reconfigurable intelligent surface (RIS), channel aging, channel estimation, achievable spectral efficiency, beyond 5G networks.
	\end{keywords}
	\section{Introduction}
	Emerging applications bring challenges that demand ever-higher data rates and increased connectivity/coverage together with ultra-reliable and low-latency wireless communication (URLLC) requirements. Especially, these applications have led to the development of  disruptive technologies such as massive multiple-input multiple-output (mMIMO) systems and millimeter-wave (mmWave) communications \cite{Boccardi2014}.  Unfortunately, existing techniques incur additional power and hardware costs while they cannot guarantee an adequate quality of service (QoS) in dead zones due to obstacles. For example, mMIMO exhibits poor performance in low scattering conditions, and  the large number of active elements might result in prohibitive energy usage. In particular, they focus on improvements regarding the transmission and reception, while the wireless propagation environment is left uncontrollable.   Furthermore, the time-varying and random nature of the wireless channel constitute the ultimate impediment to achieving the URLLC and rate targets.   
	
	In this direction, sixth generation (6G) networks, aiming at covering the higher rate demands and more stringent constraints, have appeared with the reconfigurable intelligent surface (RIS) being among its proposed promising technologies. Actually, RIS   has attracted significant attention since it overcomes the aforementioned issues \cite{Basar2019,Wu2019,Pan2020,Papazafeiropoulos2021,Kammoun2020, Papazafeiropoulos2021b,Elbir2020,Guo2020,Chen2019,Yang2021,Bjoernson2019b,DiRenzo2020}. Specifically, a RIS is a software-defined surface that is usually attached  to existing infrastructure to alleviate blockage effects. It consists of a large number of  individually-controlled, low-cost, and nearly passive elements. A RIS achieves to adapt to changes in the propagation environment and modify the radio waves since each of its elements can induce an adjustable phase shift to each incident signal, which enables a dynamic control over the wireless propagation channel. For instance, in \cite{Wu2019}, a minimization of the transmit power at the base station (BS) with signal-to-interference-plus-noise ratio (SINR) constraints took place in a RIS-assisted multi-user (MU) multiple-input single-output (MISO) communication system by  jointly optimizing the precoding and reflecting beamforming matrices (RBMs). In \cite{Pan2020}, the sum rate was maximized subject to a transmit power constraint, while in \cite{Papazafeiropoulos2021}, the sum rate was optimized by accounting also for correlated Rayleigh fading and inevitable hardware impairments at both the transceiver and the RIS. Similarly, in \cite{Kammoun2020}, the maximization of the minimum UE rate was studied in the case of a large number of antennas,  in \cite{Papazafeiropoulos2021b}, the impact of hardware impairments was evaluated, and in \cite{Yang2021}, the impact of imperfect CSI on the outage probability was investigated. Notably, given that RIS-assisted MIMO systems, having a reduced number of active  radio frequency (RF) chains,  can achieve similar performance to mMIMO without RIS \cite{DiRenzo2020}, it is indicated that a more cost and energy-efficient implementation of mMIMO is possible. 
	
	To reap the benefits of RIS and mMIMO and arrive at realistic conclusions, the acquisition of accurate channel state information (CSI) is of paramount importance \cite{Mishra2019,He2019,Elbir2020,Nadeem2020,Zheng2019,Shtaiwi2021}. \footnote{Note that many previous works assumed perfect CSI, which is a highly unrealistic assumption.} However, channel estimation (CE) in RIS-aided systems is quite challenging because of two main reasons. First, although its passive elements render RIS energy-efficient, they make infeasible conventional CE through transmitting and receiving pilots, which require active elements. Second, RIS generally includes a large number of elements, which require a prohibitively high training overhead that severely reduces  the achievable rate. For instance, in  \cite{Mishra2019}, an ON/OFF CE scheme was proposed, where   the least-squares  estimates of all RIS-assisted MISO channels   with  a single user were calculated one by one. Moreover, other works  such as \cite{He2019, Elbir2020} do not provide analytical expressions for the estimated channel that could be exploited for the derivation of the spectral efficiency (SE).  In \cite{Nadeem2020}, all RIS elements were assumed active during training but a number of sub-phases equal at least to the number of RIS elements are required, which results in a lower rate because the overhead on the coherence time for CE is larger. Moreover, this method does not provide the covariance of the estimated channel vector from all RIS elements to a specific UE but estimates of the individual channels while leaving the correlation among them unknown. An  effective method with low overhead compared to previous works is to estimate the cascaded BS-RIS-UE channel  in a single phase based on  minimum mean square error (MMSE) as in \cite{Papazafeiropoulos2021}.  Notably, another method, which reduces the overhead has been presented by exploiting RIS partitioning into subgroups, e.g., see \cite{Shtaiwi2021}. Also, therein, an insightful categorization of the various CE approaches concerning RIS-assisted systems has been provided.

	In practice, CSI is not only imperfect but can also be outdated because of channel aging \cite{Truong2013,Papazafeiropoulos2015a,Papazafeiropoulos2016,Chopra2017,Chen2021}. The cause of channel aging is the UE mobility, which renders the channel time-varying, i.e., contrary to the standard block fading model, the channel evolves with time and is different during each symbol. Thus, a mismatch appears between the current channel and the estimated channel used for detection or precoding. Interestingly, in  \cite{Chopra2017}, channel aging was also considered during the training phase for a more realistic study. Several works have studied the impact of channel aging in mMIMO systems as mentioned but little attention has been given to its effect in  RIS-assisted systems despite its great significance \cite{Chen2021}.
	
	In principle, the phase shifts  optimization lies on two methodologies with respect to CSI, namely, instantaneous CSI (I-CSI) \cite{Wu2019,Pan2020} and statistical CSI (S-CSI) \cite{Zhao2020,Kammoun2020,Papazafeiropoulos2021,VanChien2021,Papazafeiropoulos2021a,Chen2021,You2021,Zhi2022,Zhang2022}. The first approach suggests the optimization of the phases at every coherence interval because the related expressions depend on small-scale fading, while the second approach concerns expressions that depend on  large-scale statistics, which vary every several coherence intervals. Hence, the latter approach enables considerably the reduction of the signal overhead and the computational complexity, which can become excessively high in the case of a large number of RIS elements and BS antennas. Moreover, for the same reasons, the S-CSI approach is more energy-efficient. Notably, in high mobility scenarios, which are faster time-varying, the I-CSI method would be very challenging to be implemented since the tuning of the RIS parameters should be repeated very frequently. On the contrary, the application of the  S-CSI appears to be more practical.
	
	\subsection{Motivation/Contributions}
	Faced with these challenges, the motivation of this work is to conduct a realistic characterization of the downlink achievable sum SE of RIS-assisted mMIMO systems accounting for UE mobility  and imperfect CSI under correlated Rayleigh fading conditions, when  regularized zero-forcing (RZF) precoding is applied.
	\begin{itemize}
		\item Contrary to the majority of existing works on RIS-assisted systems such as \cite{Wu2019,Pan2020,Zhao2020,Kammoun2020,Zheng2019,Shtaiwi2021,Papazafeiropoulos2021,VanChien2021,Papazafeiropoulos2021a,You2021,Zhi2022}, which  considered static UEs, we account for channel aging due to UE mobility. To the best of our knowledge, the only  previous works considering channel aging are \cite{Chen2021} and \cite{Zhang2022}. The former focused on  mmWave communications with LoS links and with a finite number of BS antennas, while we consider mMIMO systems, correlated Rayleigh fading, and channel aging during the training phase too.  The latter did not account for correlated fading  mMIMO, and ZF precoding, while we have taken correlation into account and we have focused on mMIMO. Note that no optimization took place in \cite{Zhang2022}. In addition, we have resorted to the deterministic equivalent analysis to provide the rate for RZF. Also, despite that many previous works have relied on independent Rayleigh fading e.g., \cite{Wu2019,Pan2020}, we consider correlated Rayleigh fading \cite{Bjoernson2020}, which appears unavoidable in practice. Moreover, we account for S-CSI instead of I-CSI since the former is more suitable for studying time-varying channels. In particular,  compared to other works, which are based on statistical CSI such as \cite{Zhao2020,Kammoun2020,Papazafeiropoulos2021,VanChien2021,Papazafeiropoulos2021a,Chen2021,You2021,Zhi2022,Zhang2022}, our work is the only one that has studied the impact of channel aging by taking into account correlated fading, imperfect CSI, and RZF being a more advanced precoder, which increases the difficulty for the derivation of closed-form expressions. For example, compared to \cite{Papazafeiropoulos2021} focusing on the uplink and maximal ratio combining (MRC), we have assumed a more suitable model for RIS correlation, the downlink, RZF, and we have focused on the impact of channel aging. Similarly, compared to \cite{Kammoun2020}, we have assumed imperfect CSI, have focused on the sum rate instead of the max-min rate, and have studied channel aging. Also, although channel aging has been studied in \cite{Zhang2022}, no correlation has been considered, ZF instead of RZF has been applied, and no optimization has been performed.
		\item We introduce channel aging not only in the downlink data transmission phase but also during the uplink training phase as in \cite{Chopra2017}. In particular, based on \cite{Papazafeiropoulos2021}, we perform MMSE  estimation and obtain the effective channel estimate that ages with time. The proposed approach provides the estimated channel with low overhead and in closed-form that enables further manipulations to derive the achievable SE. Previous works, e.g., \cite{He2019, Elbir2020} do not provide analytical expressions or have other disadvantages such as high overhead and unknown inter-element correlation \cite{Nadeem2020}.
		\item Exploiting the deterministic equivalent (DE) analysis, we obtain the DE of the downlink sum SE of RIS-assisted mMIMO systems with RZF precoding under UE mobility and correlated Rayleigh fading conditions.  The DE results are of great importance because they provide closed-form expressions in terms of a convergent system of fixed-point 	equations that allow efficient optimization. 
		\item We formulate the maximization problem regarding the sum SE with respect to RBM and total transmit power constraints. Notably, given that the sum SE depends only on large-scale statistics, the proposed optimization  can be performed every several coherence intervals, and thus, reduce significantly the signal overhead, which is large in time-varying channels.
		\item We verify the analytical results with Monte Carlo (MC) simulations, and we  shed light on the impact of channel aging on the downlink sum SE of a RIS-assisted mMIMO system  due to correlated fading and channel aging. For comparison, we depict results corresponding to no mobility to show the degradation due to channel aging and the inferior performance of maximum ratio transmission (MRT) precoding.
	\end{itemize}

	\subsection{Paper Outline} 
	The remainder of this paper is organized as follows. Section~\ref{SystemModel} presents the system model of a RIS-assisted mMIMO system  with imperfect CSI under correlated Rayleigh fading and  channel aging conditions. Section~\ref{ChannelEstimation} describes the CE accounting for channel aging. Section~\ref{PerformanceAnalysis} presents the downlink sum SE, while Section~\ref{SumSEMaximizationDesign} provides the optimization regarding the  RBM and the transmit power.
	The numerical results are discussed in Section~\ref{Numerical}, and Section~\ref{Conclusion} concludes the paper.
	\begin{figure}[!h]
		\begin{center}
			\includegraphics[width=0.9\linewidth]{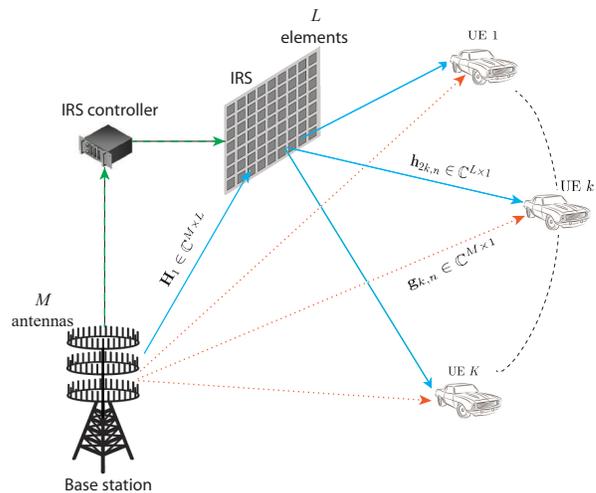}
			\caption{\footnotesize{ A downlink RIS-assisted mMIMO communication system with $ M $ BS antennas, $ L $ RIS elements, and moving $ K $ UEs.  }}
			\label{Fig0}
		\end{center}
	\end{figure}
	\subsection{Notation}Vectors and matrices are denoted by boldface lower and upper case symbols, respectively. The notations $(\cdot)^\T$, $(\cdot)^\H$, and $\tr\!\left( {\cdot} \right)$ represent the transpose, Hermitian transpose, and trace operators, respectively. The expectation operator is denoted by $\EE\left[\cdot\right]$  while $ \diag\left(\ba \right) $ represents an $ n\times n $ diagonal matrix with diagonal elements being the elements of vector $ \ba $. 
	Also, the notations $\xrightarrow[ M \rightarrow \infty]{\mbox{a.s.}}$ and $a_n\asymp b_n$ with $a_n$ and $b_n$ being two infinite sequences denote almost sure convergence as $ M \rightarrow \infty $. The notation $ \displaystyle\lim_{x\to c}  f(x) $ denotes the limit of $ f $ of $ x $ as $ x $ approaches $ c $, and the notation $  \pdv{f(x)}{x}   $ denotes the partial derivative of $ f $ with respect to  $ x $.  Finally, $\bb \sim \cC\cN{(\b0,\mathbf{\Sigma})}$ represents a circularly symmetric complex Gaussian vector with {zero mean} and covariance matrix $\mathbf{\Sigma}$.
	\section{System Model}\label{SystemModel}
	We consider a RIS-assisted downlink mMIMO  system, where a BS, equipped with $ M $ antennas communicates with $ K $  single-antenna noncooperative UEs behind obstacles. A RIS, consisting of $ L $ passive reflecting elements is located in the LoS of the BS to assist the communication with the UEs, e.g., imagine the common scenario where   both the BS and RIS are deployed at high altitude and their locations are fixed, as shown in Fig. \ref{Fig0}. The RIS can  dynamically adjust the phase shift induced by each reflecting element on the impinging electromagnetic waves through a perfect  smart controller that is connected to  the BS in terms of a perfect backhaul link. The size of each RIS element is $ d_{\mathrm{H}}\times d_{\mathrm{V}} $, where $d_\mathrm{V}$ and $d_\mathrm{H}$ express its vertical height and its horizontal width, respectively. The proposed model considers also the potential presence of direct links between the BS and the UEs. However, these could be also neglected in the cases of high penetration losses and/or big signal blockages.
	
	\subsection{Channel Model}\label{ChannelModel} 
	We account for a quasi-static fading model with coherence bandwidth much larger than the channel bandwidth. We employ the standard block fading model with each coherence interval/block including $\tau_{\mathrm{c}}= B_{\mathrm{c}}T_{\mathrm{c}}$ channel uses, where $ B_{\mathrm{c}} $ and $ T_{\mathrm{c}} $ are the coherence bandwidth and the coherence time in $\mathrm{Hz}$ and $\mathrm{s}$, respectively.

	Within the transmission in each coherence block and during the $ n $th time slot, let  $ \bH_{1}=[\bh_{11}\ldots,\bh_{1L} ] \in \mathbb{C}^{M \times L}$ and $ \bh_{2k,n} \in \mathbb{C}^{L \times 1}$ be  the LoS channel between the BS and the RIS and the channel between the RIS and UE $ k $ at the $ n $th time instant. Note that  $ \bh_{1i} $ for $ i=1,\ldots,L $ denotes the $ i $th column vector of $ \bH_{1} $.	Similarly, let $ \bg_{k,n} \in \mathbb{C}^{M \times 1} $ be the direct channel between the BS and UE $ k $ at the $ n $th time instant.  Despite that  the majority of existing works, e.g., \cite{Wu2019,Pan2020}, assumed independent Rayleigh model, in practice, correlated fading appears, which affects the performance  \cite{Bjoernson2020}.		Thus, $ \bh_{2k,n} $ and  $ \bg_{k,n} $ are described in terms of correlated Rayleigh fading distributions as
	\begin{align}
		\bh_{2k,n}&=\sqrt{\beta_{\mathrm{h}_{2},k}}\bR_{\mathrm{RIS},k}^{1/2}\bq_{2k,n},\\
		\bg_{k,n}&=\sqrt{\beta_{\mathrm{g},k}}\bR_{\mathrm{BS},k}^{1/2}\bq_{k,n},
	\end{align}
	where $ \bR_{\mathrm{RIS},k} \in \mathbb{C}^{L \times L} $ and $ \bR_{\mathrm{BS},k} \in \mathbb{C}^{M \times M} $ express the deterministic Hermitian-symmetric positive semi-definite correlation
	matrices at the RIS and the BS respectively with $ \tr\left(\bR_{\mathrm{RIS},k} \right)=L $ and $ \tr\left(\bR_{\mathrm{BS},k} \right)=M $.  The correlation matrices $ \bR_{\mathrm{RIS},k} $ and $ \bR_{\mathrm{BS},k}~\forall k$ are assumed to be known by the network since they can be obtained by  existing estimation methods \cite{Neumann2018}.\footnote{The correlation matrices and the path-losses are independent of  $n$ because these represent effects that vary with time in a much slower pace than the coherence time.} Another way of practical calculation of
		the covariance matrices follows. Especially, as can be seen by the expression of the covariance matrices, they depend on
		the distances $ d_{\mathrm{BS}} $ and $ d_{\mathrm{IRS}} $ and the angles. The distances are based on the construction of the BS and the IRS.
		Moreover, the angles can be calculated when the locations are given. Moreover,  $ \beta_{\mathrm{h}_{2},k} $ and $ \beta_{g,k} $ describe the path-losses  of the RIS-UE $ k $ and BS-UE $ k $ links, respectively. Especially, $ \beta_{g,k} $ is expected to be small because of the blockages between the BS and the UEs. Also, $ \bq_{2k,n}\sim \mathcal{CN}\left(\b0,\Id_{L}\right) $ and $ \bq_{k,n} \sim \mathcal{CN}\left(\b0,\Id_{M}\right) $ denote the corresponding fast-fading vectors at the $ n $th time instant. Note that fast fading vectors change within each coherence block, while the correlation matrices are assumed constant for a large number of coherence blocks.

	The high rank LoS channel $ \bH_{1} $ is described as
	\begin{align}
		[\bH_{1}]_{m,l}=\sqrt{\beta_{1}} \exp\big(j \frac{2 \pi }{\lambda}\left(m-1\right)d_{\mathrm{BS}}\sin \theta_{1,l}\sin \phi_{1,l}\nn\\
		+\left(l-1\right)d_{\mathrm{RIS}}\sin \theta_{2,m}\sin \phi_{2,m}\big)\!,
	\end{align}
	where  $ \beta_{1} $ is the path-loss between the BS and RIS, $ \lambda $ is the carrier wavelength, while $ d_{\mathrm{BS}} $ and $ d_{\mathrm{RIS}} $ are the inter-antenna separation at the BS and inter-element separation at the RIS, respectively \cite{Nadeem2020}. Also, $ \theta_{1,l} $ and $ \phi_{1,l} $ denote the elevation and azimuth LoS angles of departure (AoD) at the BS with respect to RIS element $ l $, and $ \theta_{2,l} $ and $ \phi_{2,l} $ denote the elevation and azimuth LoS angles of arrival (AoA) at the RIS. It is worthwhile to mention that  $ \bH_{1} $ can be obtained similarly to the covariance matrices
		since the dependence of their expressions on the distances and the angles is similar.

	The response of the $ L $ elements is described by the diagonal RBM    $\bTheta=\mathrm{diag}\left( \mu_{1}e^{j \theta_{1}}, \ldots, \mu_{L}e^{j \theta_{L}} \right)\in\mathbb{C}^{L\times L}$, where $ \theta_{l} \in [0,2\pi]$ and $ \mu_{l}\in [0,1] $ are the phase  and amplitude coefficient for RIS element $ l $, respectively. Herein, we assume maximum signal reflection, i.e., $ \mu_{l}=1~ \forall l$ \cite{Wu2019}.\footnote{Recently, it was shown that the amplitude and phase responses are 	intertwined in practice, while the assumption of  independence between the amplitude and the phase shift or even a unity amplitude is unrealistic, e.g., see \cite{Abeywickrama2020}. However, this 	assumption regarding independence still allows revealing fundamental properties of the channel aging of the proposed model, while the consideration of the phase shift model in \cite{Abeywickrama2020} is an interesting idea for extension of the current work, i.e., to study the impact of  channel aging on RIS-assisted systems by accounting for this intertwinement.    } For the sake of exposition, the overall channel vector $ \bh_{k,n}=\bg_{k,n}+ \bH_{1}\bTheta \bh_{2k,n} $, conditioned on $ \bTheta $ is distributed as $ \bh_{k,n}\sim \cC\cN\left( 0, \bR_{k} \right) $, where $ \bR_{k}= \beta_{\mathrm{g},k}\bR_{\mathrm{BS},k}+ \beta_{\mathrm{h}_{2},k}\bH_{1} \bTheta {\bR}_{\mathrm{RIS},k}\bTheta^{\H}\bH_{1}^{\H}$. Given that ${\bf R}_k$ depends on the path-losses, the correlation matrices, and $ \bH_{1} $, which are all assumed to known as explained previously, ${\bf R}_k$ can also be assumed known by the network.

	\begin{remark}\label{rem1}
		Although, it is uncommon to meet  independent Rayleigh fading in practice \cite{Bjoernson2020}, in such a case, we have $ {\bR}_{\mathrm{RIS},k}=\Id_{L}$. Then, the overall covariance becomes $ \bR_{k} =\beta_{\mathrm{d},k}\bR_{\mathrm{BS},k}+\beta_{2,k}\bH_{1}\bH_{1}^{\H}$. Obviously,  $ \bR_{k} $ does not depend on the RBM and cannot be optimized. Hence, the RIS cannot be exploited. 	Nevertheless, note that even under these conditions, the  RIS enhances  the communication with an additional signal to the receiver.
	\end{remark}
	
	\subsection{Channel Aging}
	In practice, the relative movement between the UEs and the RIS, i.e., the UE mobility causes a phenomenon, known as channel aging~\cite{Truong2013,Papazafeiropoulos2015a,Papazafeiropoulos2016}.\footnote{Normally, all RIS elements have the same relative movement comparing to a specific UE.} In particular, this movement results in a Doppler shift that makes the channel change with time. Hence, contrary to the conventional block fading channel model, the channel coefficients, exhibiting flat fading, vary from symbol to symbol. However, they are constant within one symbol.  The symbol duration is assumed smaller than or equal to the coherence time of all UEs. This assumption is common in works studying the impact of channel aging such as \cite{Truong2013}-\cite{Chopra2017}. The channel use is denoted by $ n \in \{ 1, \ldots, \tau_{\mathrm{c}}\} $.

	Mathematically,  the channel realization $ \bh_{k,n} $ at the $ n $th time instant is modeled as a function of its
	initial state $ \bh_{k,0} $ and an innovation component as \cite{Chopra2017}
	\begin{align}
		\bh_{k,n}=\al_{k,n}\bh_{k,0}+\bar{\al}_{k,n}\bee_{k,n},\label{GaussMarkoModel}
	\end{align}
	where $ \bee_{k,n}\sim\mathcal{CN}\left(\b0,\bR_{k}\right)$  denotes  the independent innovation component at the $ n $th time instant and $ \al_{k,n}= \mathrm{J}_{0}(2 \pi f_{\mathrm{D}}T_{\mathrm{s}}n) $ is the temporal correlation coefficient of UE $ k $  between the channel realizations at time $ 0 $ and $ n $ with  $\mathrm{J}_{0}(\cdot)$ being the zeroth-order Bessel function of the first kind, $T_{s}$ being the channel sampling duration,  and $f_{D}=\frac{vf_{c}}{c}$ being the maximum Doppler shift.\footnote{The second-order statistics of the channel including the path-losses and $\al_{k,n}$ are estimated during the connection establishment when the BS estimates the location and the velocity of the UE. While the path-losses are estimated using the average pilot power, $ \al_{k,n} $ can be estimated using the temporal correlation among the same set of pilots.} Also, we denote $ \bar{\al}_{k,n}=\sqrt{1-\al^{2}_{k,n}}$. Note that $v$ is the velocity of the UE, $c = 3 \times 10^{8}$ $\nicefrac{m}{s}$ is the speed of light, and $f_{c}$ is the carrier frequency. As can be seen, a higher UE velocity of the UE  or higher delay result in decrease of $ \al_{k,n} $ though not monotonically, since there are some ripples. It is worthwhile to mention that the model in \eqref{GaussMarkoModel}  is not  autoregressive of first-order as in previous works \cite{Truong2013,Papazafeiropoulos2015a} since the current channel is not determined in terms of its state   at the previous time instant,  but it depends on its state at an initial time $ \bh_{k,0} $. The advantage is that the statistics of the model exactly match with that of the Jakes’ model \cite{Chopra2017}. 
	
	\section{Channel Estimation with Channel Aging}\label{ChannelEstimation}
	Perfect CSI is not available in practice but the BS needs to estimate the channel. On this ground,  we consider the standard time-division-duplex (TDD) protocol, where each block consists of $\tau$ channel uses for the uplink training phase and $\tau_{\mathrm{d}}=\tau_{\mathrm{c}}-\tau$ channel uses for the downlink data transmission phase~\cite{massivemimobook}. The disadvantage of RIS-assisted systems is that the RIS, which consists of passive elements, cannot  process the received pilot symbols from the UEs to obtain the  estimated channels and cannot  send pilots to the BS for CE. Herein, contrary to ON/OFF channel estimation schemes such as  \cite{Mishra2019} and \cite{Nadeem2020} that require $ L+1 $ phases,  we perform the CE in a single phase.  Notably, the consideration of  the cascaded channel CE instead of the individual channels has already been applied in several works such as \cite{Papazafeiropoulos2021,Kammoun2020}. Actually, it is more beneficial to consider the overall channel because this allows computing the correlation among inter-element links, while in the case of individual channels, this correlation remains unknown. Also, our CE is accompanied by reduced feedback and allows higher achievable SE  due to the larger pre-log factor since the training overhead is much lower.
	
	During the uplink training phase, 	each UE transmits a $\tau$-length
	mutually orthogonal training sequences, i.e., $\bm \psi_{k}= \left[\psi_{k,1},\ldots,\psi_{k,\tau} \right]^{\T}\in \bbC^{\tau \times 1}$. We assume that the pilot sequence consists of $ K $  pilot symbols, which is the  minimal number for channel estimation, i.e., $ \tau \ge K $ \cite{Hassibi2003}. Also, $ \bTheta $ is assumed fixed. Thus, the received signal by the BS at time $ n $ is  given by
	\begin{align}
		\bY_{n}=\sqrt{p_{\mathrm{p}}}\sum_{i=1}^{K}\bh_{i,n}\bpsi_{i}^{\H}+\bW_{n},\label{TrainingReceived}
	\end{align}
	where $ p_{\mathrm{p}}\ge 0 $ is the common pilot transmit power for all UEs   and $ \bW_{n}\in \mathbb{C}^{M\times \tau} $ is spatially white additive Gaussian noise matrix at the BS   during this phase. After correlation of the received signal with the training sequence of UE $ k $ $\frac{1}{\sqrt{p_{\rp}}}\bpsi_k$, we obtain
	\begin{align}
		\tilde{\by}_{k,n}=\bh_{k,n}+\frac{1}{\sqrt{p_{\rp}}}\tilde{\bw}_{k,n},\label{TrainingReceived1}
	\end{align}
	where $ \tilde{\bw}_{k,n}=\bW_{n}\bpsi_{k}\sim \mathcal{CN}\left(\b0,\tilde{\sigma}^{2}\Id_{M}\right) $. 
	
	Although this received signal can be used to estimate the channel at any time slot of the block, the estimated channel  will deteriorate as the time interval between training and transmission increases. On this ground, we consider the channel estimate at $ n=K+1 $ since the estimate will be worse at a later instant. Based on \eqref{GaussMarkoModel}, the channel at the $ n $th instant ($ n\le K $) can be described in terms of the channel at time $ K+1 $ as
	\begin{align}
		\bh_{k,n}=\al_{k,\zeta-n}\bh_{k,\zeta}+\bar{\al}_{k,\zeta-n}\tilde{\bee}_{k,n},\label{innov1}
	\end{align}
	where $  \tilde{\bee}_{k,n}\sim\mathcal{CN}\left(\b0,\bR_{k}\right) $ is the  independent innovation vector, which relates $ 	\bh_{k,n} $ and $ \bh_{k,\zeta} $. Also, we have defined $ \zeta=K+1 $ to simplify the notation. Inserting \eqref{innov1} into \eqref{TrainingReceived1}, we obtain
	\begin{align}
		\tilde{\by}_{k,n}&=\al_{k,\zeta-n}\bh_{k,\zeta}+\bar{\al}_{k,\zeta-n}\tilde{\bee}_{k,n}+\frac{1}{\sqrt{p_{\rp}}}\tilde{\bw}_{k,n}\label{TrainingReceived}.
	\end{align}
	
	By applying  the standard minimum mean square error (MMSE) estimation \cite{massivemimobook}, the BS obtains the channel estimate of $ {\bh}_{k,\zeta} $ as
	\begin{align}
		\hat{\bh}_{k,\zeta}=\al_{k,\zeta-n}\bR_{k} \bQ			\tilde{\by}_{k,n},\label{estimatedChannel}
	\end{align}
	where $ \bQ=\left(\bR_{k}+\frac{\tilde{\sigma}^{2}}{{p_{\rp}}}\Id_{M}\right)^{-1} $. The estimate  	$ \hat{\bh}_{k,\zeta} $  is distributed as $ \mathcal{CN}\left(\b0,\bPhi_{k}\right) $, where $ \bPhi_{k}=\al^{2}_{k,\zeta-n}\bR_{k} \bQ \bR_{k}$.  According to the orthogonality property of MMSE estimation, the independent channel estimation error vector is $	\tilde{\bh}_{k,\zeta}=	{\bh}_{k,\zeta}-	\hat{\bh}_{k,\zeta}  $ and distributed as $ \mathcal{CN}\left(\b0,\bPsi_{k}\right) $, where $ \bPsi_{k}=\bR_{k}-\bPhi_{k} $. Notably, the channel estimate 	$ \hat{\bh}_{k,\zeta} $  in \eqref{estimatedChannel} includes the  degradation due to the  channel aging. 
	\begin{remark}
		In the case of no channel aging, i.e., when $ \al_{k,\zeta-n}=1 $, we reduce to the conventional block-fading model. Moreover, as can be seen, the estimation error takes values between $\bR_{k}(\Id_{M}- \al^{2}_{k,1} \bQ \bR_{k})  $ and  $\bR_{k}(\Id_{M}- \al^{2}_{k,K} \bQ \bR_{k})  $. Also, we observe that in the case of no mobility, the estimation error vanishes as the pilot signal-to-noise ratio (SNR) $ \gamma=\frac{p_{\rp}}{\tilde{\sigma}^{2}} $ increases, while it saturates as $\displaystyle\lim_{\gamma\to \infty}\bPsi_{k}= (1-\al^{2}_{k,\zeta-n})\bR_{k} $ in the case of channel aging. The latter shows that the estimation error increases as the UE moves with higher velocity and as the number of UEs increases since $ \zeta  $ increases. 
	\end{remark}
	
	In Sec. \ref{Numerical}, we illustrate the normalized mean square error (NMSE) defined as
	\begin{align}
		\mathrm{NMSE}_{k}&=\frac{\tr\big(\EE[(\hat{\bh}_{k}-{\bh}_{k})(\hat{\bh}_{k}-{\bh}_{k})^{\H}]\big)}{\tr\left(\EE[{\bh}_{k}{\bh}_{k}^{\H}]\right)}\\
		&=1-\frac{\tr(\bPsi_{k})}{\tr(\bR_{k})}.\label{nmse1}
	\end{align}
	
	According to \eqref{nmse1},  an increase in channel aging results in the increase of the $ \mathrm{NMSE}_{k} $. Overall, channel aging has a detrimental on channel estimation. Below, we elaborate on its impact during the downlink transmission.
	
	\section{Downlink Transmission}\label{PerformanceAnalysis}
	The downlink transmission of data from the BS to all UEs consists of a broadcast channel that has to make use of a certain precoding strategy in terms of a precoding vector $\bff_{k,n} \in \bbC^{M \times 1}$. In parallel, taking advantage of TDD and its channel reciprocity, the downlink channel is the Hermitian transpose of the uplink channel. Thus, the received signal $r_{k,n}\in\bbC$ by UE $ k $ during the data transmission phase ($n=K+1,\ldots, \tau_{\mathrm{c}}$)  can be written as
	\begin{align}
		r_{k,n}=\bh^\H_{k,n}\bs_{n}+z_{k,n},\label{DLreceivedSignal}
	\end{align}
	where  $\bs_{n}=\sum_{i=1}^{K}\sqrt{p_{i}}\bff_{i,n}x_{i,n}$ describes the transmit signal vector  by the BS,  $p_{i}\ge 0$ is the transmit power to UE $ i $, and $z_{k,n} \sim \cC\cN(0,\sigma^{2})$ is complex Gaussian noise at UE $k$. Note that   $\bff_{i,n} \in \bbC^{M \times K}$ and 
	$ x_{i,n} $ are  the linear precoding vector  and the data symbol with $ \EE\{|x_{i,n}|^{2}\}=1 $, respectively.\footnote{If we assume that mmWwave communication takes place, hybrid beamforming can be introduced as the best solution that achieves a good trade-off between cost and complexity as usually adopted in the literature. However, the study of the impact of channel aging in the mmWwave  region is left for future research   due to limited space.} The precoding vector is normalized based on the average total power constraint
	\begin{align}
		\EE\{\|\bs_{n}\|^{2}\}=\tr(\bP\bF_{n}^{\H}\bF_{n})\le  P_{\mathrm{max}}\label{PowerConstraint},
	\end{align} 
	where $ \bF_{n}=[\bff_{1,n}, \ldots, \bff_{K,n}] \in\mathbb{C}^{M \times K}$, $ \bP=\diag(p_{1}, \ldots, p_{K})$, and $  P_{\mathrm{max}}>0$ is the total transmit power. However, the channel $ \bh_{k,n} $ in \eqref{DLreceivedSignal} can be expressed as
	\begin{align}
		\bh_{k,n}=&\al_{k,n-K}\hat{\bh}_{k}+\al_{k,n-K}\tilde{\bh}_{k}+\bar{\al}_{k,n-K}\tilde{\bee}_{k,n}\nn\\
		&\al_{k,n-K}\hat{\bh}_{k}+\bar{\al}_{k,n-K}\tilde{\bee}_{k,n}\label{innov2},
	\end{align}
	where $ \hat{\bh}_{k} $ expresses the channel vector at the beginning of the data transmission phase, and $ \tilde{\bh}_{k} $ is the corresponding channel estimation error.
	
	Substitution of \eqref{innov2} into  \eqref{DLreceivedSignal} provides 
	\begin{align}
		r_{k,n}&=\al_{k,n-K}\sqrt{ p_{k}}{\bh}_{k}^{\H}\bff_{k,n}x_{k,n}
		+\bar{\al}_{k,n-K}\sqrt{ p_{k}}\tilde{\bee}_{k,n}^{\H}\bff_{k,n}x_{k,n}\nn\\
		&		+\sum_{i\ne k}^{K}\sqrt{ p_{i}}\bh^\H_{k,n}\bff_{i,n}x_{i,n}+z_{k,n}.\label{DLreceivedSignal1}
	\end{align}
	
	Although  UEs do not have instantaneous CSI, we can assume that UE $ k $ has access to $ \EE\{\hat{\bh}_{k}^{\H}\bff_{k,n}x_{k,n}\} $. Then, by using the technique in~\cite{Medard2000}, where UE $ k $ is aware of only the statistical CSI, the received signal is written as
	\begin{align}
		&r_{k,n}\!=\!\al_{k,n-K}\sqrt{p_{k}}\EE\{{\bh}_{k}^{\H}\bff_{k,n}\}x_{k,n}	\!+\!\!\bar{\al}_{k,n-K}\sqrt{p_{k}}\tilde{\bee}_{k,n}^{\H}\bff_{k,n}x_{k,n}\nn\\
		&+\!\al_{k,n-K}\sqrt{p_{k}}{\bh}_{k}^{\H}\bff_{k,n}x_{k,n}-\al_{k,n-K}\sqrt{p_{k}}\EE\{{\bh}_{k}^{\H}\bff_{k,n}\}x_{k,n}\nn\\
		&+\sum_{i\ne k}^{K}\sqrt{p_{i}}\bh^\H_{k,n}\bff_{i,n}x_{i,n}+z_{k,n}\label{DLreceivedSignal1}.
	\end{align}

	\begin{proposition}\label{LowerBound1}
		The downlink average SE for UE $ k$ of an RIS-assisted mMIMO system, accounting for imperfect CSI and channel aging,  is lower bounded by
		\begin{align}
			\mathrm{SE}_{k}	=\frac{1}{\tau_{\mathrm{c}}}\sum_{n=K+1}^{\tau_{\mathrm{c}}}\log_{2}\left ( 1+\gamma_{k,n}\right)\!,\label{LowerBound}
		\end{align}
		where $ \gamma_{k,n}$ is the achievable SINR at time $ n $  given  by \eqref{sinr}.
		\begin{figure*}
			\begin{align}
				\gamma_{k,n}=\frac{\al_{k,n-K}^{2}p_{k}|\EE\{{\bh}_{k}^{\H}\bff_{k,n}\}|^{2}}{\al_{k,n-K}^{2}p_{k}\mathrm{Var}\left\{{\bh}_{k}^{\H}\bff_{k,n}\right\}		+
					\sum_{i\ne k}^{K}\al_{i,n-K}p_{i}\EE\left\{|\bh^\H_{k,n}\bff_{i,n}|^{2}\right\} 		+\bar{\al}_{k,n-K}^{2}p_{k}\EE\left\{|	\tilde{\bee}_{k,n}^{\H}\bff_{k,n}|^{2}\right\}+
					\sigma^{2}}.\label{sinr}
			\end{align}
			\line(1,0){490}
		\end{figure*}
	\end{proposition}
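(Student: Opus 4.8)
The plan is to establish \eqref{LowerBound} via the classical \emph{use-and-then-forget} (UatF) bound, i.e. the worst-case uncorrelated additive-noise argument of \cite{Medard2000}, adapted to the channel-aging decomposition already written in \eqref{innov2}. The starting point is the rewritten received signal \eqref{DLreceivedSignal1}: I would designate the first term $\al_{k,n-K}\sqrt{p_{k}}\EE\{\bh_{k}^{\H}\bff_{k,n}\}x_{k,n}$ as the \textbf{desired signal}, in which the effective channel gain $\al_{k,n-K}\EE\{\bh_{k}^{\H}\bff_{k,n}\}$ is deterministic over the fading because the UE only has statistical CSI. All remaining terms, namely the beamforming-gain uncertainty $\al_{k,n-K}\sqrt{p_{k}}(\bh_{k}^{\H}\bff_{k,n}-\EE\{\bh_{k}^{\H}\bff_{k,n}\})x_{k,n}$, the aging-innovation term $\bar{\al}_{k,n-K}\sqrt{p_{k}}\tilde{\bee}_{k,n}^{\H}\bff_{k,n}x_{k,n}$, the multiuser interference $\sum_{i\ne k}\sqrt{p_{i}}\bh_{k,n}^{\H}\bff_{i,n}x_{i,n}$, and the thermal noise $z_{k,n}$, are lumped into an effective noise $\nu_{k,n}$.

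The crucial intermediate step is to show that $\nu_{k,n}$ is zero-mean and \emph{uncorrelated} with the desired symbol $x_{k,n}$, so that the aggregate effective-noise power is simply the sum of the individual variances. This relies on three facts: the data symbols $x_{i,n}$ are zero-mean, unit-power, and mutually independent; the aging-innovation vector $\tilde{\bee}_{k,n}$ in \eqref{innov2} is independent of the estimate $\hat{\bh}_{k}$ (and hence of the precoder $\bff_{k,n}$, which is built only from channel estimates available at the start of the data phase), by the Gauss--Markov model \eqref{GaussMarkoModel}; and the self-interference term $\bh_{k}^{\H}\bff_{k,n}-\EE\{\bh_{k}^{\H}\bff_{k,n}\}$ is mean-zero by construction and orthogonal to its own mean. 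Given these, I invoke the standard lemma that for a scalar channel $r=ax+\nu$ with deterministic $a$, unit-power $x$, and zero-mean $\nu$ uncorrelated with $x$, the worst-case (rate-minimizing) noise of fixed variance is Gaussian, which yields the achievable-rate lower bound $\log_{2}(1+|a|^{2}/\EE\{|\nu|^{2}\})$. The ratio $|a|^{2}/\EE\{|\nu|^{2}\}$ is exactly $\gamma_{k,n}$ once each variance is evaluated: the desired-signal power $\al_{k,n-K}^{2}p_{k}|\EE\{\bh_{k}^{\H}\bff_{k,n}\}|^{2}$ in the numerator, and in the denominator the beamforming-gain variance $\al_{k,n-K}^{2}p_{k}\,\mathrm{Var}\{\bh_{k}^{\H}\bff_{k,n}\}$, the interference $\sum_{i\ne k}\al_{i,n-K}p_{i}\EE\{|\bh_{k,n}^{\H}\bff_{i,n}|^{2}\}$, the aging term $\bar{\al}_{k,n-K}^{2}p_{k}\EE\{|\tilde{\bee}_{k,n}^{\H}\bff_{k,n}|^{2}\}$, and $\sigma^{2}$, reproducing \eqref{sinr}.

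Finally, since each of the $\tau_{\mathrm{c}}-K$ data slots $n=K+1,\dots,\tau_{\mathrm{c}}$ experiences a different aging coefficient $\al_{k,n-K}$ and hence a different per-slot SINR $\gamma_{k,n}$, I would sum the per-slot bounds $\log_{2}(1+\gamma_{k,n})$ over the data phase and apply the pre-log penalty $1/\tau_{\mathrm{c}}$ accounting for the pilot overhead, which gives \eqref{LowerBound}. I expect the main obstacle to be the uncorrelatedness bookkeeping in the second step rather than any single hard inequality: one must argue carefully that the precoders $\bff_{i,n}$ depend only on the estimates fixed at time $\zeta=K+1$ (so they are statistically independent of the current innovation $\tilde{\bee}_{k,n}$), and then that all cross terms between the four noise contributions vanish in expectation, so that no covariance cross-terms survive and the denominator of \eqref{sinr} is a clean sum of variances. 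The remaining variance evaluations are routine given the Gaussian statistics of $\hat{\bh}_{k}$, $\tilde{\bh}_{k}$, and $\tilde{\bee}_{k,n}$ established in Section~\ref{ChannelEstimation}.
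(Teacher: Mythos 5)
Your proposal is correct and follows essentially the same route as the paper's own proof: the worst-case uncorrelated additive Gaussian noise bound of \cite{Medard2000} and \cite[Theorem 1]{Hassibi2003}, with $\EE\{\bh_{k}^{\H}\bff_{k,n}\}$ as the deterministic desired gain and all remaining terms lumped into zero-mean effective noise, followed by averaging the per-slot bounds over $n=K+1,\dots,\tau_{\mathrm{c}}$. The paper states this argument only in citation form, so your explicit bookkeeping of the uncorrelatedness of the cross terms is a faithful (and more detailed) rendering of the same proof.
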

	\begin{proof}
		First, based on a similar approach to~\cite{Bjornson2015,Pitarokoilis2015},     the average achievable SE including the achievable SINR $ \gamma_{k,n} $ in the transmission phase is computed  for each $ n $. Next,   the average over these SEs is obtained as in \eqref{LowerBound}.

		Regarding the achievable SINR $ \gamma_{k,n} $,  given  the Gaussianity of the input symbols, it  is obtained by accounting for    a worst-case assumption for the computation of the mutual information~\cite[Theorem $1$]{Hassibi2003}. In particular, except for $ \EE\{\bh^\H_{k,n}\bff_{k,n}\} $ being the deterministic desired signal detected by UE $ k $, all others terms are treated as independent Gaussian noise with zero mean and  variance equal to the variance of interference plus noise.  
	\end{proof}
	
	The following analysis requires $ M $, $ L $, and $ K $ increase  but with a given bounded ratio as $ 0 < \lim \inf \frac{K
	}{M}\le \lim \sup \frac{K
	}{M}< \infty $  and $ 0 < \lim \inf \frac{L
	}{M}\le \lim \sup \frac{L
	}{M}< \infty $. Henceforth, this notation is denoted as $ M \to \infty$. Taking into account that, according to \eqref{innov2}, the available CSI at time $ n $ is $ \al_{k,n-K}\hat{\bh}_{k} $, the BS designs its RZF precoder as\footnote{Despite that RZF precoding is generally suboptimal,  it has been applied selected in the case of mMIMO in many works due to reasons of complexity and to provide closed-form expressions. In other words, it is common in mMIMO to consider a linear precoder between the MRT and RZF precoders. Hence, for these reasons, in this work, we have selected the more optimal precoder, i.e., RZF despite its complexity. Note that  RZF precoding is a very good choice compared to MRT, and the corresponding derivation of the rate together with its optimization require delicate manipulations, which raise the novelty of this work.}
	\begin{align}
		\bff_{k,n}=
		\al_{k,n-K}\sqrt{\lambda} \bSigma\hat{\bh}_{k}, \label{precoderRZF}
	\end{align}
	where $ \lambda $ is   a normalization parameter, which is obtained due to \eqref{PowerConstraint} as 
	\begin{align}
		\lambda=\frac{P_{\mathrm{max}}}{	\al_{k,n-K}^{2}\tr(\bP\hat{\bH}^{\H}\bSigma^{2}\hat{\bH})}. \label{eq:lamda} 
	\end{align} 
	Note that ${\bSigma}\triangleq	\left(\al_{k,n-K}^{2}\hat{\bH}\hat{\bH}^{\H} \!+\! \bZ + M \al \Id_M\right)^{-1}$, where $ \bZ \in \mathbb{C}^{M \times M} $ is  an arbitrary Hermitian non negative definite matrix, and $\al$ is a regularization scaled by $M$ to make expressions converge to a constant as $M \to \infty$.  As a result, by denoting $ \rho=\frac{P_{\mathrm{max}}}{	\sigma^{2}} $ the SNR at the downlink transmission phase, the SINR of UE $ k $ under RZF precoding can be written as \eqref{SINRwithPrecoding} at the top of the next page. For the sake of convenience, we denote $ S_{k,n} $ and $ I_{k,n} $ the numerator and denominator of \eqref{SINRwithPrecoding}, respectively, i.e., we have  $ \gamma_{k,n}=\frac{S_{k,n}}{I_{k,n}}$.
	\begin{figure*}
		\begin{align}
			\gamma_{k,n}=\frac{p_{k}|\EE\{{\bh}_{k}^{\H}\bSigma\hat{\bh}_{k}	 \}|^{2}}{p_{k}\mathrm{Var}\left\{{\bh}_{k}^{\H}\bSigma\hat{\bh}_{k}	 \right\}		+
				\sum_{i\ne k}^{K}p_{i}\frac{\al_{i,n-K}^{4}}{\al_{k,n-K}^{4}}\EE\left\{|\bh^\H_{k,n}\bSigma\hat{\bh}_{i}	 |^{2}\right\} 		+p_{k}\frac{\bar{\al}_{k,n-K}^{2}}{\al_{k,n-K}^{2}}\EE\left\{|	\tilde{\bee}_{k,n}^{\H}\bSigma\hat{\bh}_{k}	 |^{2}\right\}+
				\frac{\tr(\bP\hat{\bH}^{\H}\bSigma^{2}\hat{\bH})}{\al_{k,n-K}^{2}\rho}}\label{SINRwithPrecoding}.
		\end{align}
		\line(1,0){490}
	\end{figure*}
	
	Based on similar assumptions to \cite[Assump. A1-A3]{Hoydis2013} regarding the covariance matrices under study, we rely on the DE analysis to obtain the DE downlink SE of UE $ k $.\footnote{The DE analysis results in deterministic expressions, which make lengthy Monte-Carlo simulations unnecessary. Also, its results are tight approximations even for conventional systems with moderate  dimensions, e.g., an $ 8 \times 8 $ matrix~\cite{Couillet2011}. Hence, the following DE SINR in \eqref{DLdelayedCSIetaRZF} is of great practical importance.} 	The DE SINR  obeys to $\gamma_{k,n}-\bar{\gamma}_{k,n}\xrightarrow[M \rightarrow \infty]{\mbox{a.s.}}0$ while the deterministic SE of UE $k$ obeys to
	\begin{align}
		\mathrm{SE}_{k}-\mathrm{\overline{SE}}_{k}\xrightarrow[M \rightarrow \infty]{\mbox{a.s.}}0,\label{DeterministicSumrate}
	\end{align}
	where $\mathrm{\overline{SE}}_{k}= \frac{1}{T_{\mathrm{c}}}\sum_{n=K+1}^{T_{\mathrm{c}}}\log_{2}(1 + \bar{\gamma}_{k,n}) $ based on the dominated convergence and the continuous mapping theorem~\cite{Couillet2011}.

	\begin{Theorem}\label{theorem:DLagedCSIRZF}
		The downlink DE of the SINR of UE $k$ with RZF precoding at time $n$, accounting for correlated Rayleigh fading, imperfect CSI, and channel aging due to  UE mobility  is given by \eqref{DLdelayedCSIetaRZF}, where
		\begin{figure*}
			\begin{align}
				\!	\!\bar{\gamma}_{k,n} \!=\! \frac{ p_{k}\delta_{k}^{2}	 }{\displaystyle p_{k}\tilde{\delta}_{k}\!+\!	p_{k}\frac{\bar{\al}_{k,n-K}^{2}}{\al_{k,n-K}^{2}}\delta_{k}^{\mathrm{e}}\!+\!\sum_{i\ne k}\!p_{i}\frac{\al_{i,n-K}^{4}(1\!+\!\al_{k,n-K}^{2}\delta_{k})^{2}}{\al_{k,n-K}^{4}(1\!+\!\al_{i,n-K}^{2}{\delta_{i}})^{2}}Q_{ik}\!+\!	\frac{1}{M}\sum_{i=1}^{K}p_{i}\frac{(1\!+\!\al_{k,n-K}^{2}\delta_{k})^{2}	\delta_{i}^{\lambda}	 }{\rho(1+	\al_{i,n-K}^{2}\delta_{i})^{2}}}\label{DLdelayedCSIetaRZF}
			\end{align}
			\line(1,0){490}
		\end{figure*}
		\begin{align}
			\bar{\lambda}&=\frac{P_{\mathrm{max}}}{  \frac{1}{M}\sum_{i=1}^{K}p_{i}\frac{\al_{k,n-K}^{2}	\delta_{i}^{\lambda}	 }{(1+	\al_{i,n-K}^{2}\delta_{i})^{2}}},
		\end{align}
		$\delta_{k}= \frac{1}{M}\tr(\bPhi_{k} \bT ),
		{\tilde{\delta}}_{k}=\frac{1}{M^{2}}\tr \big((\bR_{k}-\bPhi_{k})\tilde{\bT}(\bPhi_{k})\big),
		\delta_{k}^{\mathrm{e}}=\frac{1}{M}\tr\big(\bR_{k}\tilde{\bT}(\bPhi_{k})\big)$, 	$\delta_{i}^{\lambda}=\frac{1}{M}\tr\big(\bPhi_{i}\tilde{\bT}(\Id_{M})\big)		$,
		$ \zeta_{ki}=\frac{1}{M^{2}}\tr\big(\bR_{k}\tilde{\bT}(\bPhi_{i})\big) $, 	 $ 	\mu_{ki}=\frac{1}{M^{2}}\tr\big(\bPhi_{k}\tilde{\bT}(\bPhi_{i})\big) $,
		$Q_{ik}= \zeta_{ki}\!+\!\frac{|\delta_{k}|^{2}\mu_{ki}}{\left(1\!+\!\al_{k,n-K}^{2}\delta_{k}\right)^{2}}\!-\!2\mathrm{Re}\left\{\!\!
		\frac{ \delta_{k}^{*}\mu_{ki}}{1\!+\!\al_{k,n-K}^{2}\delta_{k}}\right\}$, 
		with
		\begin{itemize}
			\renewcommand{\labelitemi}{$\ast$}
			\item $\bT\!=\!\Big(\!\displaystyle\sum_{i=1 }^{K} \frac{1}{\!M\left(1+\al_{i,n-K}^{2}\delta_{i}\right)}\bPhi_{i}\!+\!\al \Id_{M}\Big)^{-1}$,
			\item $ \tilde{\bT}(\bL)=\bT\bL\bPhi_{i}\bT+\sum_{i=1}^{K}\frac{\tilde{\delta}_{i}\bT\bPhi_{i}\bT}{M\left(1+\al_{i,n-K}^{2}\delta_{i}\right)^{2}}$, $ \bL=\bPhi_{i}, \Id_{M} $
			\item $ \tilde{\deltav}(\bL) \!\!=\left(\Id_{K}\!-\!\bF\right)^{\!-1}\bf$ with $ 	\left[\bF\right]_{k,i}\!=\!\frac{1}{M^{2}\left(1+ \al_{i,n-K}^{2}\delta_{i}\right)}\!\tr\!\left(\bPhi_{k}\bT\bL\bT\right) $, $ \bL=\bPhi_{i}, \Id_{M} $.
		\end{itemize} 
	\end{Theorem}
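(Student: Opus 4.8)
The plan is to replace the numerator and the four summands of the denominator of \eqref{SINRwithPrecoding} by their deterministic equivalents under \cite[Assump.~A1--A3]{Hoydis2013}, and then pass to the limit via the continuous mapping theorem to obtain $\gamma_{k,n}-\bar{\gamma}_{k,n}\xrightarrow[M\to\infty]{\mathrm{a.s.}}0$. The engine is the standard large-system toolkit---the matrix inversion lemma, the trace (quadratic-form) lemma, and the rank-one perturbation lemma~\cite{Couillet2011}. First I would detach the $k$th estimated channel from the resolvent: writing $\bSigma_{k}=\big(\al_{k,n-K}^{2}\hatvH_{k}\hatvH_{k}^{\H}+\bZ+M\al\Id_{M}\big)^{-1}$ with $\hatvH_{k}$ the estimate matrix with column $k$ removed, Sherman--Morrison gives $\bSigma\hatvh_{k}=\bSigma_{k}\hatvh_{k}/\big(1+\al_{k,n-K}^{2}\hatvh_{k}^{\H}\bSigma_{k}\hatvh_{k}\big)$. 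Since $\hatvh_{k}\sim\cC\cN(\b0,\bPhi_{k})$ is independent of $\bSigma_{k}$, the trace lemma yields $\hatvh_{k}^{\H}\bSigma_{k}\hatvh_{k}-\tfrac{1}{M}\tr(\bPhi_{k}\bSigma_{k})\to0$, so each such denominator concentrates on $1+\al_{k,n-K}^{2}\delta_{k}$.

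Next I would invoke the first-order deterministic equivalent~\cite[Thm.~1]{Hoydis2013}: for deterministic $\bA$ of bounded spectral norm, $\tfrac{1}{M}\tr(\bA\bSigma)-\tfrac{1}{M}\tr(\bA\bT)\to0$, where $\bT$ and the scalars $\delta_{i}$ solve the coupled fixed-point system stated in the theorem. Using the estimate/error split $\bh_{k}=\hatvh_{k}+\tilde{\bh}_{k}$, the MMSE orthogonality $\hatvh_{k}\perp\tilde{\bh}_{k}$ (so the cross term vanishes in expectation), and $\EE\{\hatvh_{k}\hatvh_{k}^{\H}\}=\bPhi_{k}$, the signal term becomes $\EE\{\bh_{k}^{\H}\bSigma\hatvh_{k}\}\to\delta_{k}/(1+\al_{k,n-K}^{2}\delta_{k})$. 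The aging coefficients $\al_{i,n-K}$ multiplying the interference and aging terms of \eqref{SINRwithPrecoding} are deterministic and carried through unchanged.

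The heavy part is the denominator, which needs second-order equivalents of the type $\tfrac{1}{M}\tr(\bA\bSigma\bB\bSigma)$. For $\mathrm{Var}\{\bh_{k}^{\H}\bSigma\hatvh_{k}\}$ and the interference terms $\EE\{|\bh_{k,n}^{\H}\bSigma\hatvh_{i}|^{2}\}$ I would expand the squares, apply Sherman--Morrison a second time to detach $\hatvh_{i}$ (resp.\ $\hatvh_{k}$), and reduce each expectation to traces of two resolvents. These are obtained by differentiating the fixed-point equation, which is precisely the role of the auxiliary matrix $\tilde{\bT}(\bL)$ and the linear system $\tilde{\deltav}(\bL)=(\Id_{K}-\bF)^{-1}\bf$ in the statement; the cross terms assemble into the corrections $\zeta_{ki}$, $\mu_{ki}$ and hence into $Q_{ik}$. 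The aging innovation $\tilde{\bee}_{k,n}\sim\cC\cN(\b0,\bR_{k})$, independent of $\bSigma$ and $\hatvh_{k}$, gives $\EE\{|\tilde{\bee}_{k,n}^{\H}\bSigma\hatvh_{k}|^{2}\}\to\delta_{k}^{\mathrm{e}}$ by one trace lemma, while the normalization $\tr(\bP\hatvH^{\H}\bSigma^{2}\hatvH)$ combined with \eqref{eq:lamda} produces $\bar{\lambda}$ and the $\delta_{i}^{\lambda}$ terms (the case $\bL=\Id_{M}$).

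Collecting these limits and substituting into \eqref{SINRwithPrecoding} reproduces $\bar{\gamma}_{k,n}$ of \eqref{DLdelayedCSIetaRZF} once the repeated $(1+\al_{k,n-K}^{2}\delta_{k})$ factors from the successive Sherman--Morrison steps cancel consistently between numerator and denominator. I expect the main obstacle to be the rigorous justification of the second-order step: establishing well-posedness (invertibility of $\Id_{K}-\bF$) and uniqueness of the auxiliary system, and controlling the accumulation of the rank-one perturbation errors across the two successive column extractions from $\bSigma$. The bookkeeping of the non-uniform aging coefficients $\al_{i,n-K}$ through the distinct Sherman--Morrison denominators is the most error-prone but otherwise mechanical part.
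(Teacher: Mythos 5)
Your proposal follows essentially the same route as the paper's proof in Appendix~A: Sherman--Morrison to detach $\hatvh_{k}$ (and a second application, via \cite[Lemma~2]{Hoydis2013}, to detach $\hatvh_{i}$ in the interference term), the trace and rank-one perturbation lemmas to concentrate the quadratic forms, the first-order deterministic equivalent for the signal term, and the second-order equivalents $\tilde{\bT}(\bL)$ from \cite[Theorem~2]{Hoydis2013} for the variance, innovation, interference, and normalization terms, assembled term by term into \eqref{DLdelayedCSIetaRZF}. The correspondence is complete, including the identification of $Q_{ik}$, $\zeta_{ki}$, $\mu_{ki}$ from the double column extraction and of $\delta_{i}^{\lambda}$ from the $\bL=\Id_{M}$ case, so no further comparison is needed.
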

	\proof: Please see Appendix~\ref{theorem3}.\endproof

	\section{Sum SE Maximization}\label{SumSEMaximizationDesign}
	In this section, we focus on the  optimization of the sum SE of a RIS-assisted time-varying mMIMO  system with channel aging. Specifically, the sum rate maximization problem is described as
	\begin{subequations}
		\begin{align}
			(\mathcal{P}1)~~&\max_{\bTheta,\bp\ge 0} 	\;	\mathrm{\overline{SE}}=\sum_{i=1}^{K}\mathrm{\overline{SE}}_{k}
			\label{Maximization1} \\
			&~	\mathrm{s.t}~~~\;\!	|\tilde{\theta}_{l}|=1,~~ l=1,\dots,L,\label{Maximization2} \\
			&\;\quad\;\;\;\;\;\!\!~\!\sum_{i=1}^{K}p_{i}\le P_{\mathrm{max}},
			\label{Maximization3} 
		\end{align}
	\end{subequations}
	where  we have denoted the elements of $ \bTheta $ as $\tilde{\theta}_{l}= \exp\left(j \theta_{l}\right) $ for all $ l $ and the vector $ \bp=[p_{1}, \ldots,p_{K}]^{\T} $. The first constraint in \eqref{Maximization2} means that each RIS element induces a phase shift without any change on the amplitude of the incoming signal, while  constraint \eqref{Maximization3}  ensures that the BS transmit power is kept below the maximum power $ P_{\mathrm{max}} $.\footnote{The considered optimization problem allocates the available resources (power vector, RIS configuration) to maximize the total spectral efficiency (i.e, sum-rate). This is a well-known objective function when the total/aggregate rate of the network is the key design priority. It is worth also noting that the  sum capacity/rate is a fundamental performance metric (one  dimension) in multi-user networks from information theoretic perspective. To consider QoS per individual user, the consideration of other objective functions that take into account fairness (e.g. max-min rate) and/or minimum individual rates is required, which are beyond the scope of this paper and can be considered for future work.}
	
	The solution of the optimization problem $ 	(\mathcal{P}1) $ is challenging  due to its non-convexity and the  unit-modulus constraint concerning $ \tilde{\theta}_{l} $. To tackle this difficulty, we consider the alternating optimization (AO) technique, where $ \bTheta $ 	and  $ \bp $ are going to be solved separately and iteratively. In particular,  first,  we focus on finding  the optimum  $ \bTheta $ given  fixed $ \bp $. Next, we solve for $ \bp $ with  fixed $ \bTheta $. The iteration of this procedure, where the sum rate increases  at  each iteration step, continues until convergence  to  the optimum value  since the sum rate is upper-bounded  subject to the power constraint \eqref{Maximization3}.  Note that all computations take place at the BS.

	\subsection{RIS Configuration}
	The exploitation of the RIS potentials implies the optimization of the RBM towards maximum sum SE. The presence of the RBM appears inside the covariance matrices  in the DE achievable SINR in \eqref{DLdelayedCSIetaRZF}. Given that the logarithm function is monotonic, it is sufficient to maximize $ \bar{\gamma}_{k,n} $ instead of $\mathrm{\overline{SE}}_{k} $. Hence, by assuming infinite resolution phase shifters, the RBM optimization problem   is formulated as
	\begin{align}\begin{split}
			(\mathcal{P}2)~~~~~~~\max_{\bTheta} ~~~	&	\;	\mathrm{\overline{SE}}\\
			\mathrm{s.t}~~~&|\tilde{\theta}_{l}|=1,~~ l=1,\dots,L.
		\end{split}\label{Maximization} 
	\end{align}

	Although the  problem  $ 	(\mathcal{P}2) $ is  non-convex in terms of $ \bTheta $ and it is subject to  a unit-modulus constraint regarding $ \tilde{\theta}_{l} $, application of  the projected gradient 	ascent algorithm  can achieve a local optimal solution  by projecting the solution onto the closest feasible point at every step 	  until converging to a 	stationary point. Specifically, let the vector $ \bc_{t} =[\tilde{\theta}_{t,1}, \ldots, \tilde{\theta}_{t,L}]^{\T}$ include the phases at step  $ t $. The next step of the algorithm is described by
	\begin{align}
		\tilde{\bc}_{t+1}&=\bc_{t}+\tilde{\mu} \bq_{t},~
		\bc_{t+1}=\exp\left(j \arg \left(\tilde{\bc}_{t+1}\right)\right),\label{sol2}
	\end{align}
	where the parameter $ \tilde{\mu }$ denotes the step size and  $ \bq_{t}= \pdv{ \bar{\gamma}_{k,n}}{\bc_{t}^{*}} $  describes the  ascent direction at step $ t $, given below by Proposition \ref{Prop:optimPhase}.\footnote{Given that the feasible set is the unit circle, then any point  $ x $ should be   projected on this circle, i.e., it should be $ x/|x| $, which is equal to $ \exp(j \angle(x)) $.} Note that  the suitable step size  at each iteration is selected based on  the backtracking line search \cite{Boyd2004}. The solution is obtained by the  projection problem $ \min_{|\theta_{l} |=1, l=1,\ldots,L}\|\bc-\tilde{\bc}\|^{2} $ under the unit-modulus constraint. 	Algorithm \ref{Algoa1} presents the overview of this procedure.  For the sake of convenience, we denote  the partial derivative with respect to $ \bc_{t}^{*} $  by $ (\cdot)' $.
	
	\begin{algorithm}
		\caption{Projected Gradient Ascent Algorithm for the RIS Design}
		1.				 \textbf{Initialisation}: $ \bc_{0} =\exp\left(j\pi/2\right)\one_{L}$, $ \bTheta_{0}=\diag\left(\bc_{0}\right) $, $\bar{\gamma}_{k,n}=f\left(\bTheta_{0}\right) $ given by \eqref{DLdelayedCSIetaRZF}; $ \epsilon>0 $\\
		2. \textbf{Iteration} $ t $: \textbf{for} $ t=0,1,\dots, $ do\\
		3. $\bq_{t}=   \pdv{\bar{\gamma}_{k,n}}{\bc_{t}^{*}}  $, where $\pdv{\bar{\gamma}_{k,n}}{\bc_{t}^{*}} $ is given by Proposition \ref{Prop:optimPhase};\\
		4. \textbf{Find} $ \tilde{\mu} $ by backtrack line search $( f\left(\bTheta_{0}\right),\bq_{t},\bc_{t})$ \cite{Boyd2004};\\
		5. $ \tilde{\bc}_{t+1}=\bc_{t}+\tilde{\mu} \bq_{t} $;\\
		6. 	$ \bc_{t+1}=\exp\left(j \arg \left(\tilde{\bc}_{l+1}\right)\right) $; $ \bTheta_{l+1}= \diag\left(\bc_{l+1}\right) $;\\
		7. $ \bar{\gamma}_{k,n}^{t+1}=f\left(\bTheta_{t+1}\right) $;\\
		8. \textbf{Until} $ \| \bar{\gamma}_{k,n}^{t+1}- \bar{\gamma}_{k,n}^{t}\|^{2} <\epsilon$; \textbf{Obtain} $ \bTheta^{\star}=\bTheta_{t+1}$;\\
		9. \textbf{end for}\label{Algoa1}
	\end{algorithm}

	\begin{proposition}\label{Prop:optimPhase}
		The derivative of $ \bar{\gamma}_{k,n} $ with respect to $ \bc_{t}^{*} $ is given  by
		\begin{align}
			\!\bar{\gamma}_{k,n}' =\frac{S_{k}'I_{k}-S_{k}I_{k}'}{I_{k}^{2}},\label{gam1}
		\end{align}
		where 
		\begin{align}
			S_{k}'&= 2p_{k}\delta_{k}\delta_{k}', \label{sg11}\\
			I_{k}'&=\displaystyle\tilde{\delta}_{k}'\!+\!(\delta_{k}^{\mathrm{e}})'\!+\!\sum_{i\ne k}\!\frac{\al_{i,n-K}^{2}(1\!+\!\al_{k,n-K}^{2}\delta_{k})^{2}}{\al_{k,n-K}^{2}(1\!+\!\al_{i,n-K}^{2}{\delta_{i}})^{2}}Q_{ik}'	\nn\\
			\!&+\frac{1}{M}\sum_{i=1}^{K}p_{i}\frac{2(1\!+\!\al_{k,n-K}^{2}\delta_{k})\delta_{k'}	\delta_{i}^{\lambda}+(1\!+\!\al_{k,n-K}^{2}\delta_{k})^{2}	(\delta_{i}^{\lambda})'	 }{(1+	\al_{i,n-K}^{2}\delta_{i})^{2}}\nn\\
			\!&+\frac{1}{M}\sum_{i=1}^{K}p_{i}\frac{\al_{i,n-K}^{2}(1\!+\!\al_{k,n-K}^{2}\delta_{k})^{2}\rho\delta_{i}'	\delta_{i}^{\lambda}	 }{(1+	\al_{i,n-K}^{2}\delta_{i})^{3}}\nn\\
			\!&+\sum_{i\ne k}\!\frac{2 \al_{i,n-K}^{2}\al_{k,n-K}^{2}\delta_{k}'(1\!+\!\al_{k,n-K}^{2}\delta_{k})}{\al_{k,n-K}^{2}\rho(1\!+\!\al_{i,n-K}^{2}{\delta_{i}})^{3}}Q_{ik}\!\nn\\
			&-\!\sum_{i\ne k}\!\frac{3\al_{i,n-K}^{4}\delta_{i}'(1\!+\!\al_{k,n-K}^{2}\delta_{k})^{2}}{\al_{k,n-K}^{2}(1\!+\!\al_{i,n-K}^{2}{\delta_{i}})^{5}}Q_{ik}\label{intDeriv}
		\end{align}
		with the auxiliary variables  given by
		\begin{align}
			\!\!	\!\!\!\! &\tilde{\delta}_{k}'\!=\!\frac{1}{M}\tr\!\big(\!(\bR_{k}'-\bPhi_{k}')\tilde{\bT}(\bPhi_{k})\!+\!(\bR_{k}-\bPhi_{k})\tilde{\bT}'(\bPhi_{k})\!\big),\label{sg41}\\
			( &\delta_{k}^{\mathrm{e}})'\!=\!\frac{1}{M}\tr\big(\bR_{k}'\tilde{\bT}(\bPhi_{k})+\bR_{k}\tilde{\bT}'(\bPhi_{k})\big),\label{sg51}\\
			& \bT'=-\bT(\bT^{-1})'\bT,	\\
			&	(\bT^{-1})'=\frac{1}{M}\displaystyle \sum_{i=1 }^{K}\frac{\bPhi_{i}'\left(1+\al_{i,n-K}^{2}\delta_{i}\right)-\al_{i,n-K}^{2}\bPhi_{i}\delta_{i}'}{(1+\al_{i,n-K}^{2}\delta_{i})^{2}},\label{sg31}\\
			&\tilde{\bT}'(\bPhi_{k})=\bT'\bPhi_{k}\bT+\bT\bPhi_{k}'\bT+\bT\bPhi_{k}\bT'\nn\\
			&+\frac{1}{M}\sum_{i=1}^{K}\frac{\tilde{\delta}_{i}'\bT\bPhi_{i}\bT+\tilde{\delta}_{i}(\bT'\bPhi_{k}\bT+\bT\bPhi_{k}'\bT+\bT\bPhi_{k}\bT')}{(1+\al_{i,n-K}^{2}\delta_{i})^{3}}\nn\\
			&-\frac{1}{M}\sum_{i=1}^{K}\frac{2\al_{i,n-K}^{2}\tilde{\delta}_{i}\bT\bPhi_{i}\bT \delta_{i}'}{(1+\al_{i,n-K}^{2}\delta_{i})^{}},\\
			& 	Q_{ik}'= \zeta_{ki}'\!+\!\frac{\delta_{k}\delta_{k}'\mu_{ki}+|\delta_{k}|^{2}\mu_{ki}}{\big(1\!+\!\al_{k,n-K}^{2}\delta_{k}\big)^{\!2}}\!+\!\frac{2\al_{k,n-K}^{2}|\delta_{k}|^{2}\mu_{ki}\delta_{k}'}{\big(1\!+\!\al_{k,n-K}^{2}\delta_{k}\big)^{\!3}}\nn\\
			&-\!2\mathrm{Re}\left\{\!\!
			\frac{ (\delta_{k}^{*})'\mu_{ki}+\delta_{k}^{*}\mu_{ki}'}{1\!+\!\al_{k,n-K}^{2}\delta_{k}}+\!\!\!
			\frac{ \al_{k,n-K}^{2}\delta_{k}^{*}\mu_{ki}\delta_{k}'}{\big(1\!+\!\al_{k,n-K}^{2}\delta_{k}\big)^{\!2}}\right\},\\
			& 	\zeta_{ki}'=\frac{1}{M^{2}}\tr\big(\bR_{k}'\tilde{\bT}(\bPhi_{i})+\bR_{k}\tilde{\bT}'(\bPhi_{i})\big), \\
			&\mu_{ki}'=\frac{1}{M^{2}}\tr\big(\bPhi_{k}'\tilde{\bT}(\bPhi_{i})+\bPhi_{k}\tilde{\bT}'(\bPhi_{i})\big),\\
			& 	\bar{\lambda}'=\frac{K \big( \frac{1}{M} \tr \left(\frac{\Zm}{M} +\al \Id_M\right) \tilde{\bT}'(\Id_{M})-\frac{1}{M}\tr\Tm'\big)}{ \big(\frac{1}{M}\tr\Tm - \frac{1}{M} \tr \left(\frac{\Zm}{M} +\al \Id_M\right) \tilde{\bT}(\Id_{M})\big)^{2}}.
		\end{align}

	\end{proposition}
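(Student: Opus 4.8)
The plan is to regard $\bar{\gamma}_{k,n}=S_{k}/I_{k}$, with $S_{k}$ and $I_{k}$ the numerator and denominator of \eqref{SINRwithPrecoding}, as a composition of smooth functions of the phase vector $\bc_{t}$, and to differentiate it layer by layer with respect to $\bc_{t}^{*}$ using Wirtinger calculus, which is the appropriate notion here since $\bar{\gamma}_{k,n}$ is real-valued and the ascent direction in Algorithm~\ref{Algoa1} is its conjugate gradient. At the outermost level the quotient rule gives \eqref{gam1} immediately, so the entire task reduces to evaluating $S_{k}'$ and $I_{k}'$. Since $S_{k}=p_{k}\delta_{k}^{2}$, one step of the product rule yields \eqref{sg11}, and since $I_{k}$ is a sum of rational combinations of the deterministic-equivalent scalars $\delta_{k},\tilde{\delta}_{k},\delta_{k}^{\mathrm{e}},\delta_{i}^{\lambda},\zeta_{ki},\mu_{ki},Q_{ik}$, repeated application of the sum, product and chain rules produces \eqref{intDeriv} and reduces $I_{k}'$ to the derivatives of those scalars. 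These in turn are traces built from the matrices $\bR_{k},\bPhi_{k},\bT,\tilde{\bT}$, so the whole computation bottoms out at the derivatives of these four matrices.

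The second step is to compute the elementary matrix derivatives. The only dependence on the phases enters through $\bR_{k}=\beta_{\mathrm{g},k}\bR_{\mathrm{BS},k}+\beta_{\mathrm{h}_{2},k}\bH_{1}\bTheta\bR_{\mathrm{RIS},k}\bTheta^{\H}\bH_{1}^{\H}$ with $\bTheta=\diag(\bc_{t})$. Differentiating with respect to a conjugate phase holds $\bTheta$ fixed and replaces $\bTheta^{\H}$ by the single-entry selection matrix $\bE_{ll}$ (the matrix whose only nonzero entry is a $1$ in position $(l,l)$), giving $\bR_{k}'=\beta_{\mathrm{h}_{2},k}\bH_{1}\bTheta\bR_{\mathrm{RIS},k}\bE_{ll}\bH_{1}^{\H}$ in closed form, one matrix per component $l$. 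The inverse-derivative identity $(\bX^{-1})'=-\bX^{-1}\bX'\bX^{-1}$ then gives $\bQ'=-\bQ\bR_{k}'\bQ$, hence $\bPhi_{k}'=\al_{k,\zeta-n}^{2}(\bR_{k}'\bQ\bR_{k}+\bR_{k}\bQ'\bR_{k}+\bR_{k}\bQ\bR_{k}')$, and the same identity applied to $\bT$ gives $\bT'=-\bT(\bT^{-1})'\bT$ with $(\bT^{-1})'$ as in \eqref{sg31}. Every remaining object is a trace of a product of these matrices, so differentiating $\tilde{\bT}(\cdot)$ through its definition and using $\tr(\bA\bB)'=\tr(\bA'\bB+\bA\bB')$ mechanically yields $\tilde{\delta}_{k}'$, $(\delta_{k}^{\mathrm{e}})'$, $\zeta_{ki}'$, $\mu_{ki}'$ and $Q_{ik}'$, which is exactly what the listed auxiliary formulas record.

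The main obstacle is that the scalars $\delta_{i}$ are not given explicitly but are the solution of the coupled fixed-point system $\delta_{i}=\frac{1}{M}\tr(\bPhi_{i}\bT)$ in which $\bT$ itself depends on all the $\delta_{j}$; consequently $\delta_{i}'$ cannot be read off and must be obtained by implicit differentiation. Differentiating this identity and substituting $\bT'=-\bT(\bT^{-1})'\bT$ turns the relations into a linear system for the vector $(\delta_{1}',\dots,\delta_{K}')$ whose coefficient matrix is $\Id_{K}-\bF$, with $\bF$ the very matrix entering the $\tilde{\deltav}$ construction of Theorem~\ref{theorem:DLagedCSIRZF}; its invertibility is inherited from the well-posedness of the deterministic-equivalent system already established there. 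Solving this system supplies the $\delta_{i}'$ on which all the other derivatives depend. Once they are in hand, the same $\tilde{\bT}'$ and trace-differentiation machinery applied to the normalisation $\bar{\lambda}$ gives $\bar{\lambda}'$, and back-substitution of everything into $S_{k}'$, $I_{k}'$ and \eqref{gam1} assembles the stated expressions, completing the proof.
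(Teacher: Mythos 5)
Your proposal is correct and follows essentially the same route as the paper's Appendix~B: the quotient rule at the top level, reduction of $S_k'$ and $I_k'$ to the derivatives of the deterministic-equivalent scalars, the inverse-derivative identity for $\bQ'$ and $\bT'$, and Wirtinger differentiation of $\bR_k$ with respect to the conjugate phases (your per-component form $\bR_k'=\beta_{\mathrm{h}_2,k}\bH_1\bTheta\bR_{\mathrm{RIS},k}\bE_{ll}\bH_1^{\H}$ is equivalent to the paper's Lemma~\ref{traceProd}, which packages the same fact as a trace identity). The one place you go beyond the paper is in making explicit that the relations $\delta_i'=\frac{1}{M}\tr(\bPhi_i'\bT+\bPhi_i\bT')$ are circular through $(\bT^{-1})'$ and must be resolved as a linear system in $(\delta_1',\dots,\delta_K')$ --- a point the paper leaves implicit --- though note the coefficient matrix of that system carries an extra factor $\al_{i,n-K}^2/(1+\al_{i,n-K}^2\delta_i)$ relative to the matrix $\bF$ of Theorem~\ref{theorem:DLagedCSIRZF}, so it is of the same type but not literally identical.
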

	\proof The proof  is given in Appendix~\ref{optimPhase}.\endproof
	
	The RBM beamforming design is based on the gradient ascent , which results in a significant advantage because the gradient ascent is derived in a closed-form. This method comes with low computational complexity because it consists of simple matrix operations. Specifically, the complexity of Algorithm \ref{Algoa1} is $ \mathcal{O}\left(MN^{2}+N+M\right) $, which consists of the fundamental system parameters  $ M $ and $N $ with the number of RIS elements having the higher (square) impact.
	
	\subsection{Power Allocation}
	Given  a fixed RBM $ \bTheta $, the objective is the maximization of the sum SE with respect to $ \bp $. Specifically, we have 
	\begin{align}\begin{split}
			(\mathcal{P}3)~~~~~~~\max_{\bp\ge 0} ~~~	&			\mathrm{\overline{SE}}\\
			\mathrm{s.t}~~\;\!&\sum_{i=1}^{K}p_{i}\le P_{\mathrm{max}},
		\end{split}\label{Maximization} 
	\end{align}
	where  	$\mathrm{\overline{SE}}$ is given by \eqref{Maximization1}. 	This problem is not convex but a local optimal solution can be obtained by using a weighted minimum mean square error (MMSE) reformulation of the sum SE maximization. The SINR $\bar{\gamma}_{k,n} $ in \eqref{DLdelayedCSIetaRZF} can be described as a function of the downlink power coefficients given by the vector $ \bp $ as
	\begin{align}
		\bar{\gamma}_{k,n} =\frac{p_{k} q_{k}}{\bc^{\T}\bp},\label{SINR1}
	\end{align}
	where  $\bc\!=\![c_{1}, \ldots,c_{K} ]^{\T}  $ with 
	\begin{align}
		q_{k}&\!=\!\delta_{k}^{2},\! \\
		c_{k}&\!=\!\tilde{\delta}_{k}\!+\!	\frac{\bar{\al}_{k,n-K}^{2}}{\al_{k,n-K}^{2}}\delta_{k}^{\mathrm{e}}+\!\frac{1}{M}	\delta_{i}^{\lambda}\!,~\forall g \\
		c_{gi}&=\frac{\al_{i,n-K}^{4}(1\!+\!\al_{k,n-K}^{2}\delta_{k})^{2}}{\al_{k,n-K}^{4}(1\!+\!\al_{i,n-K}^{2}{\delta_{i}})^{2}}Q_{ik}\nn\\
		&+\frac{1}{M}\frac{(1\!+\!\al_{k,n-K}^{2}\delta_{k})^{2}	\delta_{i}^{\lambda}	 }{(1+	\al_{i,n-K}^{2}\delta_{i})^{2}},~\forall g, ~\forall i\ne g.
	\end{align}
	
	The MMSE reformulation includes writing  the SINR in \eqref{SINR1} in terms of a single‐input and single‐output (SISO) channel that is described as
	\begin{align}
		\tilde{y}_{k}=\sqrt{p_{k} q_{k}}s_{k}+\sum_{i=1}^{K}\sqrt{p_{i}c_{i}}s_{i},
	\end{align}
	where $ \tilde{y}_{k} $ is the received signal, and $ s_{i }\in \mathbb{C} $ expresses the normalized and independent random data signal with $ \EE\{|s_{i}|^{2}=1\} $. The receiver can compute an estimate $\hat{s}_{k}=v_{k}^{*} 	\tilde{y}_{k} $ of the desired signal $ s_{k} $ by minimizing the  MSE 	$ e_{k}(\bp,v_{k}) =[|\hat{s}_{k}-s_{k}|^{2}]$, where  $ v_{k} $ as a scalar combining coefficient. Specifically, the MSE is written as
	\begin{align}
		e_{k}(\bp,v_{k})=v_{k}^{2}\left(p_{k}q_{k}+\bc^{\T}\bp\right)-2 v_{k}\sqrt{q_{k}P_k}+1.\label{mse1}
	\end{align}
	
	The coefficient $ 	v_{k} $, which minimizes the MSE 	$ e_{k}(\bp,v_{k})$ for a given $ \bp $, is given by 
	\begin{align}
		v_{k}=\frac{\sqrt{p_{k} q_{k}}}{p_{k}q_{k}+\sum_{i=1}^{K}p_{i}c_{i}}.
	\end{align}

	Substituting $ v_{k} $ into \eqref{mse1}, the MSE becomes $ 1/\left(1+	\bar{\gamma}_{k,n} \right) $. According to the  weighted MMSE method, we introduce the auxiliary weight parameter $ d_{k}\ge 0 $ for the MSE $  e_{k}$ and focus on the solution of the following  optimization problem
	\begin{align}\begin{split}
			(\mathcal{P}4)~\min_{\substack{\bp\ge 0,\\
					\{v_{k}, d_{k}\ge 0: k=1,\ldots,K\}}} 	&		\!\sum_{i=1}^{K}\!d_{i} e_{i}(\bp,\bv_{i})-\ln(d_{1})\\
			\mathrm{s.t}~~\;\!&\sum_{i=1}^{K}p_{i}\le P_{\mathrm{max}}.
		\end{split}\label{Maximization10} 
	\end{align}

	It is worthwhile to mention that  problems  $ 	(\mathcal{P}3) $ and $ 	(\mathcal{P}4) $ are equivalent because they have the same global optimal solution. The equivalence  relies on the fact that the optimal $ d_{k} $ in \eqref{Maximization10} is $ 1/ e_{k}=(1+		\bar{\gamma}_{k,n})$. The benefit of  the reformulation results in the following lemma, which is  adapted based on \cite[Th. 3]{Shi2011}.
	
	\begin{lemma}
		The block descent coordinate algorithm, described as Algorithm \ref{Algoa2} below, converges to a local optimum of $ 	(\mathcal{P}4) $ by means of AO among three blocks of variables being $ \{v_{k}:  k=1,\ldots,K\} $, $ \{d_{k}:  k=1,\ldots,K\} $, and $ \bp $.
	\end{lemma}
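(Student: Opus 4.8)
The plan is to follow the weighted-MMSE argument of \cite[Th.~3]{Shi2011}, adapted to the three-block structure of $(\mathcal{P}4)$, in three movements: (i) solve each block exactly, (ii) deduce monotone convergence of the objective, and (iii) certify that limit points are stationary and transfer this back to $(\mathcal{P}3)$. First I would verify that, with the other two blocks held fixed, each subproblem is convex with a \emph{unique} minimizer that the algorithm computes in closed form. For the combining block $\{v_k\}$, the objective $\sum_i d_i e_i(\bp,v_i)$ is separable and strictly convex quadratic in each $v_k$ (the leading coefficient $d_k(p_k q_k+\bc^{\T}\bp)$ is positive), so its minimizer is the Wiener coefficient already displayed. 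For the weight block $\{d_k\}$, each summand $d_k e_k-\ln d_k$ is strictly convex on $d_k>0$ and uniquely minimized at $d_k=1/e_k=1+\bar{\gamma}_{k,n}$. For the power block, substituting $t_k=\sqrt{p_k}$ turns each $e_k$ into a convex quadratic in $\bt=(t_1,\dots,t_K)$ and the budget $\sum_i p_i\le P_{\mathrm{max}}$ into the convex constraint $\|\bt\|^2\le P_{\mathrm{max}}$, so the subproblem is a convex QP whose KKT system (a single budget multiplier) has a unique solution.

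Second, since every block update attains the exact global minimum of its subproblem, the objective of $(\mathcal{P}4)$ is non-increasing across the cyclic sweep $\{v_k\}\to\{d_k\}\to\bp$. The feasible power set is compact and each $e_k$ is bounded away from zero (it is lower bounded by the minimum MSE), so the objective is bounded below; hence the monotone sequence of objective values converges.

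Third, I would invoke the classical block-coordinate-descent convergence result (Bertsekas; Tseng): the $(\mathcal{P}4)$ objective is continuously differentiable on the feasible set and each block subproblem has a unique minimizer, so every limit point of the iterate sequence is a coordinatewise minimizer and therefore a stationary point of $(\mathcal{P}4)$. Finally, I would transfer this to the original problem via the equivalence already noted in the text: substituting the optimal $d_k=1+\bar{\gamma}_{k,n}$ and the optimal $v_k$ collapses the objective to $\sum_i\ln(1+\bar{\gamma}_{k,n})$ up to an additive constant, so a stationary point of $(\mathcal{P}4)$ is a stationary point, hence a local optimum, of the sum-SE maximization $(\mathcal{P}3)$.

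The main obstacle I anticipate is precisely the third step: certifying stationarity of the limit points for the jointly non-convex problem, rather than merely convergence of the objective values. With three or more blocks, exact cyclic minimization need not converge to a stationary point in general (Powell-type pathologies); the hypothesis that rescues the argument is the \emph{uniqueness} of each per-block minimizer, which the strict convexity and closed-form solutions established in the first step supply. The remaining verifications, namely boundedness below of the objective under the power constraint and the chain-rule identity relating the weighted-MMSE objective to the sum rate, are routine once the block minimizers are in hand.
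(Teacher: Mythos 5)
Your proposal is correct and follows essentially the same route as the paper, which does not spell out a proof but simply adapts Theorem~3 of the cited weighted-MMSE reference; your three steps (unique closed-form minimizer per block, monotone bounded descent, stationarity of limit points via block-coordinate-descent theory with unique per-block minimizers, then transfer to the sum-SE problem through the substitution $d_k=1/e_k$) are precisely the argument that reference supplies. Your explicit attention to the three-block Powell pathology and why per-block uniqueness resolves it is the key hypothesis underlying that theorem, so nothing is missing.
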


	\begin{algorithm}
		\caption{Block coordinate descent algorithm for solving $ 	(\mathcal{P}4) $}
	1.				 \textbf{Initialisation}: Set $ \bp=\frac{P_{\mathrm{max}}}{K}\one_{K} $ (arbitrary value) and the solution accuracy $ \epsilon >0 $, \\
	2. \textbf{while} the objective function in \eqref{Maximization10} is not improved more than $ \epsilon $ \textbf{do}\\
	3. $ 	v_{k}=\frac{\sqrt{p_{k} q_{k}}}{p_{k}q_{k}+\sum_{i=1}^{K}p_{i}c_{i}}, $ $ i=1,\ldots, K $\\
	4. $ d_{k}=1/e_{k}(\bp,v_{k}), $ $ k=1,\ldots, K $\\
	5. Solve the following problem for the current values of $ v_{k} $ and $ d_{k} $:\\\vskip-12mm
	\begin{align}\begin{split}
			(\mathcal{P}5)~~~~~~~\min_{\bp \ge 0} ~~~	&		\sum_{i=1}^{K}d_{i} e_{i}(\bp,\bv_{i})\\
			\mathrm{s.t}~~\;\!&\sum_{i=1}^{K}p_{i}\le P_{\mathrm{max}},\\
		\end{split}\label{Maximization11} 
	\end{align}	\vskip-2mm
	6. Update $ \bp $ by the obtained solution to \eqref{Maximization11}\\
	7. \textbf{end while}\\
	8. \textbf{Output:} $ \bp^{\star} $\label{Algoa2}
\end{algorithm}

Algorithm \ref{Algoa2}  delineates the whole operation for the power allocation.  Step $ 5 $  includes a subproblem that has to be solved in every iteration. Fortunately, its solution can be obtained in closed-form with    $ \sqrt{p_{k}}, k=1\ldots, K $  treated as optimization variables by   decomposing  the problem into $ K$ independent subproblems. In particular, the solution is given by 
\begin{align}
	p_{k}=\min\!\bigg(\!\!P_{\mathrm{max}},\frac{q_{k}d_{k}^{2}v_{k}^{2}}{\big(q_{k}d_{k}v_{k}^{2}+\sum_{i=1}^{K}d_{i}v_{i}^{2}c_{i}\big)^{\!2}}\!\!\bigg).
\end{align}


\begin{remark}
	The proposed algorithms, i.e., Algorithms \ref{Algoa1} and \ref{Algoa2} converge quickly and have low computation complexity. Moreover, given that both algorithms achieve  a local optimum,  and that the overall algorithm is based on AO,  the final solution corresponds to a local optimum, which means that different initializations  will result  in different solutions,  as will be shown below in Sec. \ref{Numerical}.
\end{remark}

	The power allocation presents a similar complexity to the RBM design since similar matrix operations take place in Algorithm 2, i.e., its complexity is $ \mathcal{O}\left(MN^{2}+N+M\right) $.

\section{Numerical Results}\label{Numerical}
In this section, we elaborate on the numerical results concerning the downlink sum SE of RIS-assisted mMIMO systems with  channel aging, imperfect CSI, and correlated Rayleigh fading. MC simulations, portrayed by "\ding{53}" marks, verify our analysis for $n \to \infty$ even for finite (conventional) system dimensions.  This property has been already observed in previous works relying on DE analysis~\cite{Couillet2011,Hoydis2013,Papazafeiropoulos2015a}. For the sake of comparison, in certain cases, we also depict the scenarios with MRT precoding, static UEs, and the absence of RIS.

The simulation setup consists of a a single cell with radius  $R = 1000$ meters. In the center of the cell, there is a BS, which consists of a  uniform linear array (ULA) with $ M =100$ antennas that serve   $ K = 20 $ moving UEs while the communication is assisted  by an RIS assembled by a uniform planar array (UPA) of $ L=100 $ elements.  The length of the uplink training duration is $\tau=K$ symbols and the pilot transmit power is $p_{\mathrm{p}}=6~\mathrm{dBm}$. Assuming that the coherence time and bandwidth are $T_{\mathrm{c}}=2~\mathrm{ms}$ and $B_{\mathrm{c}}=100~\mathrm{kHz}$, respectively, i.e., the coherence block consists of $ 200 $ channel uses. Also,   $ \sigma^2=-174+10\log_{10}B_{\mathrm{c}} $. For the sake of exposition, we assume that the temporal correlation coefficient $ \al_{k,n} $ is the same across all UEs.

The correlation matrices $ \bR_{\mathrm{BS},k} $ and $ \bR_{\mathrm{RIS},k} $ are  computed according to \cite{Hoydis2013} and \cite{Bjoernson2020}, respectively. The size of each RIS element, required by the latter, is given by $ d_{\mathrm{H}}\!=\!d_{\mathrm{V}}\!=\!\lambda/4 $. In addition, the  path-losses corresponding to  the BS-to-RIS and RIS-to-UE $ k $ links are given by \cite{Wu2019,Kammoun2020}
\begin{align}
	\beta_{1}=\frac{C_{1}}{r_{1}^{\al_{1}}},~~~\beta_{2,k}=\frac{C_{2}}{r_{2,k}^{\al_{2}}},
\end{align}
where $ \al_{i} $ and $ r_{i} $ for $ i=1,2 $ are the path-loss exponent and distance for link $ i $. Note that  $ C_{1}=26 $ dB and $ C_{2}=28 $ dB, which describe  the path-losses at a reference distance of $ 1$ m  \cite{Bjoernson2019b}. In the case of the LoS link, we consider the same value for $ \beta_{\mathrm{d},k} $ as for $ \beta_{2,k} $ but with an additional penetration loss of $ 15~\mathrm{dB} $. The choice of these values  has been made based on the 3GPP Urban Micro (UMi) scenario from TR36.814 for a carrier frequency of $ 2 $ GHz and a noise level of $ -80 $ dBm. In particular,  the path-losses for links $ 1 $ and $ 2 $ are generated based on the LOS and NLOS versions \cite{Access2010}.  Specifically, we have $ \al_{1} =2.2$ and $ \al_{2} =3.67$ while $ r_{1} =8~\mathrm{m}$ and $ r_{2} =60~\mathrm{m}$. The Doppler spread, corresponding to a relative velocity of  $135$ km/h between the BS and the UEs, is $f_{\mathrm{D}}=250~\mathrm{Hz}$.    Moreover, if we assume that the   bandwidth is ${W}=20\mathrm{MHz}$, the symbol time is $T_s=1/(2\mathrm{W})=0.025~\mathrm{\mu s}$.

\begin{figure}[!h]
	\begin{center}
		\includegraphics[width=0.9\linewidth]{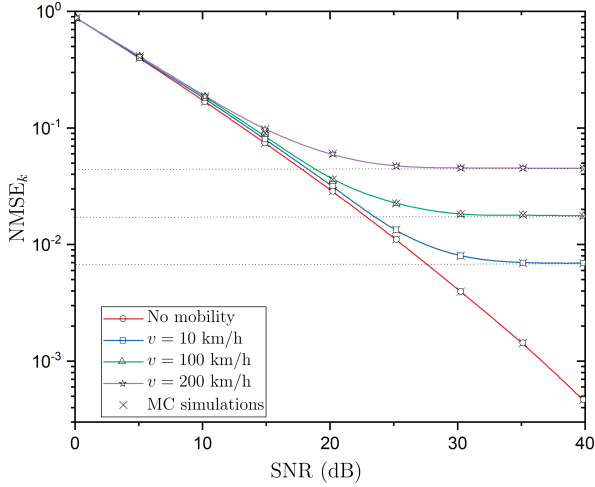}
		\caption{\footnotesize{NMSE of UE $ k $ versus the SNR of an RIS-assisted MIMO system with imperfect CSI ($ M=100 $, $ L=100 $, $ K=20 $) for different UE velocities in the cases of uncorrelated fading at the RIS (Analytical results and MC simulations). }}
		\label{Fig1}
	\end{center}
\end{figure}

Fig. \ref{Fig1} depicts the relative estimation error per channel element, i.e., the NMSE versus the downlink SNR for different velocities. Certain noise floors appear  as $ p \to \infty $. 
Moreover, we illustrate the result corresponding to no mobility, which decreases without bound. In addition, we  observe that the error floors take larger values with increasing  mobility (channel aging). It is shown that the NMSE saturates after $ 25~\mathrm{dB}$ for high velocity while it saturates much later for slow velocity. Note that these results correspond to uncorrelated fading at the RIS  to avoid any RBM optimization because, in this case, the NMSE is independent of $ \bTheta $
(Remark~\ref{rem1}). However, below, in the case of the sum SE, the optimization  with respect to $ \bTheta $ is applied.

\begin{figure}[!h]
	\begin{center}
		\includegraphics[width=0.9\linewidth]{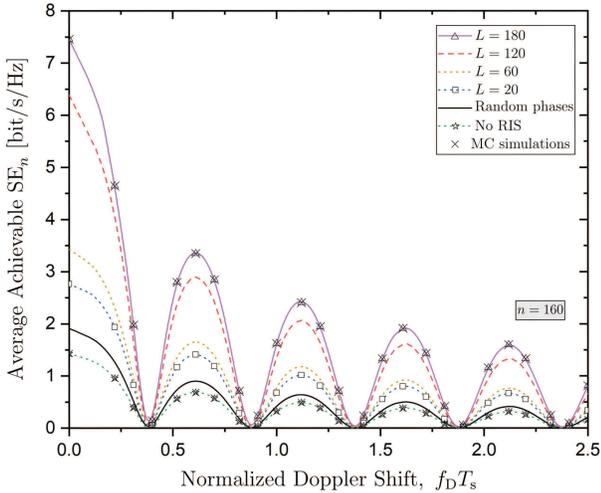}
		\caption{\footnotesize{Downlink achievable sum SE with RZF precoding of an RIS-assisted MIMO system with channel aging versus the normalized Doppler shift $ f_{\mathrm{D}}T_{\mathrm{s}}  $ for different
				number of RIS elements $ L $ ($ M=100 $, $ K=20 $)  (Analytical results and MC simulations). }}
		\label{Fig2}
	\end{center}
\end{figure}

In Fig \ref{Fig2}, we depict the achievable average  $ \mathrm{SE}_{n} $ versus the normalized Doppler shift $ f_{\mathrm{D}}T_{\mathrm{s}} $ for $ n=160 $ while varying the number of RIS elements, i.e., $ L=20, 60, 120,180 $. The ripples  are caused because of the dependence from the correlation coefficient,  which takes the form of the Bessel function of zeroth order, i.e., $\al_{k,n-K}^{2}=\mathrm{J}_{0}^{2}(2 \pi f_{\mathrm{D}}T_{\mathrm{s}}n)  $. It is shown that by increasing $f_{\mathrm{D}}T_{\mathrm{s}}$,  the average sum SE decreases. An increase in $ f_{\mathrm{D}}T_{\mathrm{s}} $ is equivalent to an increase in velocity or duration of the symbol time.  Especially, when $ f_{\mathrm{D}}T_{\mathrm{s}}\approx 0.39 $, the sum rate becomes almost zero for the first time, and then the magnitude fluctuates while tending to zero. Moreover, a larger RIS in terms of number of elements results in larger average sum SE but the curves keep the same shape. Even the zeros of the sum SE appear at the same normalized Doppler shifts. Hence, the presence of a RIS improves the performance even in channel aging conditions.
\begin{figure}[!h]
	\begin{center}
		\includegraphics[width=0.9\linewidth]{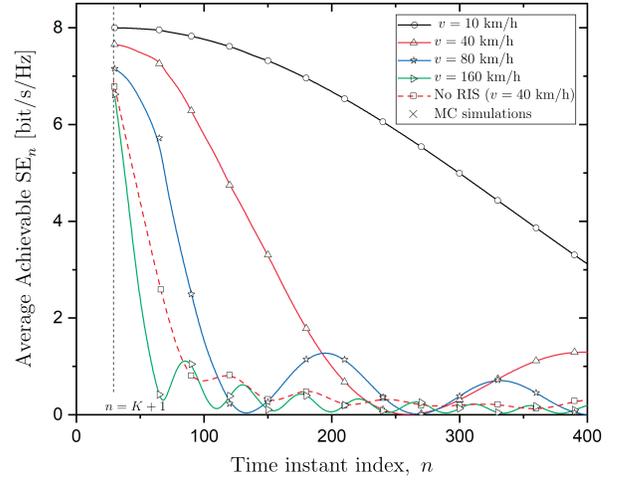}
		\caption{\footnotesize{Downlink achievable sum SE with RZF precoding of an RIS-assisted MIMO system with channel aging versus the time instant index $ n $ for different
				velocities $ v $ ($ M=100 $, $ L=100 $, $ K=20 $)  (Analytical results and MC simulations).  }}
		\label{Fig3}
	\end{center}
\end{figure}

Fig. \ref{Fig3} shows the achievable average  $ \mathrm{SE}_{n} $ versus the time instant index $ n $ for different UEs velocities. The time index starts at $ n=K+1 $, where the data transmission begins. As can be seen, at a given time instant, $ \mathrm{SE}_{n} $ decreases with UE mobility.  Moreover, as the velocity increases, the first zero position moves to the left, which suggests that the length of  coherence time should be designed to account for the impact of channel aging. Also, we depict the scenario with no RIS in the case $ v=40~\mathrm{km/h} $, which appears lower SE. 

\begin{figure}[!h]
	\begin{center}
		\includegraphics[width=0.9\linewidth]{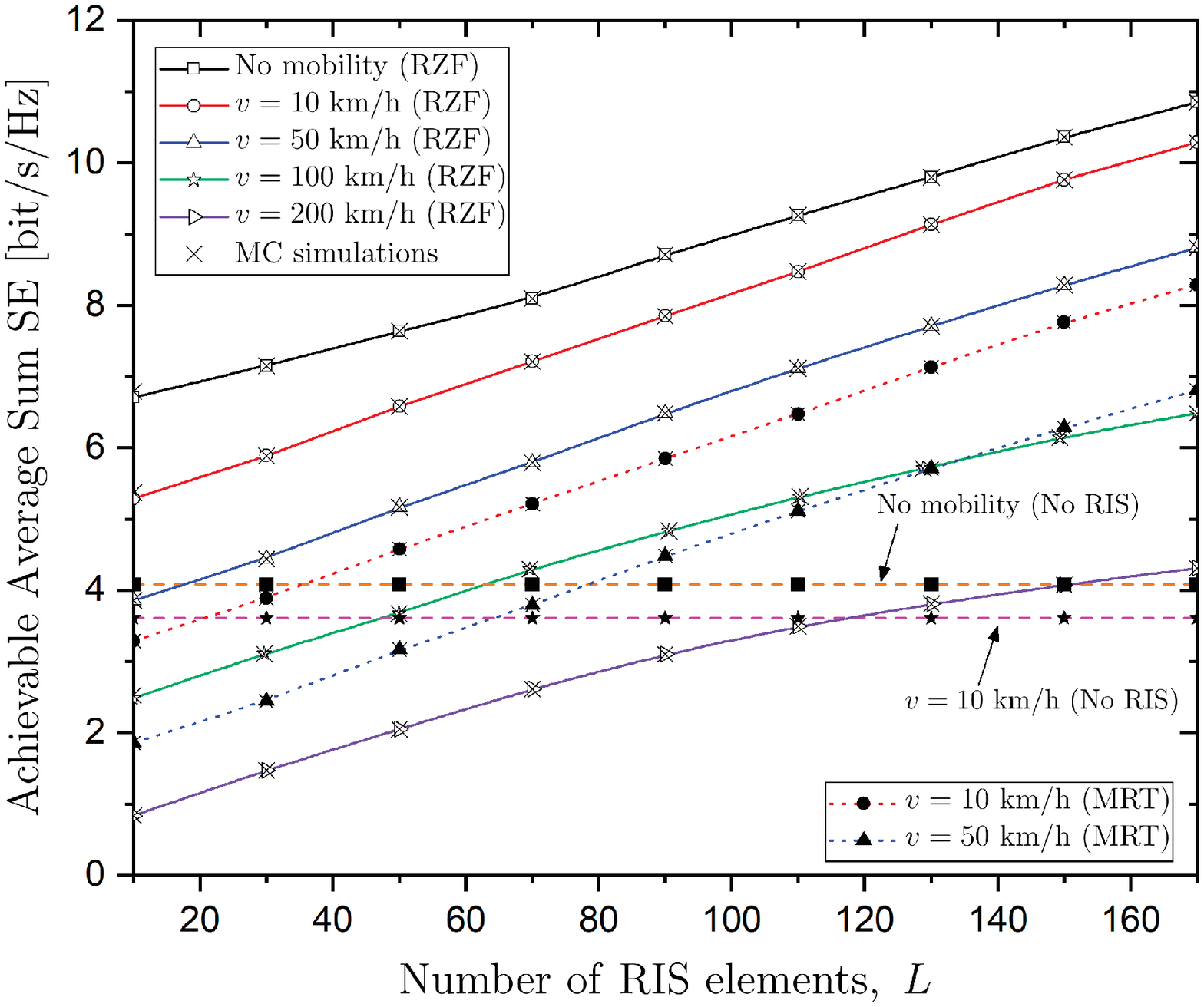}
		\caption{\footnotesize{Downlink achievable sum SE with RZF precoding of an RIS-assisted MIMO system with channel aging versus  the number of RIS elements $ L $ for different
				velocities $ v $ ($ M=100 $,  $ K=20 $)  (Analytical results and MC simulations).  }}
		\label{Fig4}
	\end{center}
\end{figure}

Fig. \ref{Fig4} illustrates the achievable downlink sum  $ \mathrm{\overline{SE}} $ versus the number of RIS elements $ L $ for varying UE mobility in terms of the relative velocity $ v $. Obviously, $ \mathrm{\overline{SE}} $ increases with $ L $ as expected, but an increase in velocity degrades the performance. The larger the velocity becomes, the larger the loss in the sum SE is observed. However, an increase with respect to the number of elements still allows for improving the SE despite the degradation due to mobility.  Also, we show the results corresponding to static UEs and no RIS that correspond to the horizontal lines, which are parallel to the $ x $-axis. For the sake of further comparison, we illustrate the performance of the simpler MRT precoding under channel aging conditions (dotted lines), which increases with $ L $ but is inferior compared to RZF.

\begin{figure}[!h]
	\begin{center}
		\includegraphics[width=0.9\linewidth]{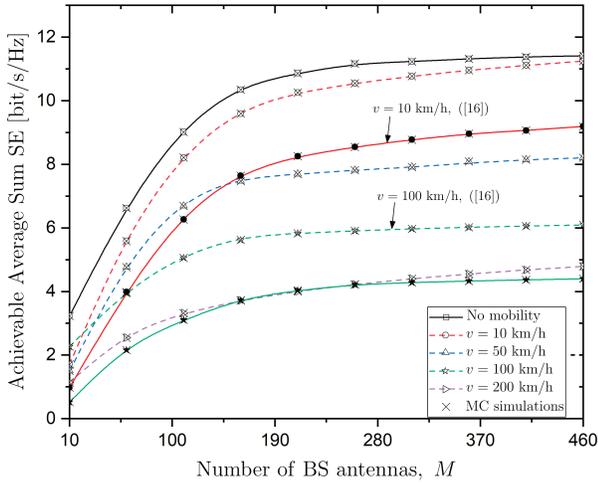}
		\caption{\footnotesize{Downlink achievable sum SE with RZF precoding of an RIS-assisted MIMO system with channel aging versus  the number BS antennas $ M $ for different
				velocities $ v $ ($ L=100 $, $ K=20 $)  (Analytical results and MC simulations). }}
		\label{Fig5}
	\end{center}
\end{figure}

Fig. \ref{Fig5} shows also the impact of channel aging in terms of the achievable sum SE with respect to the number of BS antennas $ M $ for different velocities. Generally, the downlink sum rate
presents  a rise with $ M $. Notably, we observe that when the velocity increases, the sum rate decreases while an increase in the number of BS antennas increases the performance. For the sake of comparison, we have included the solid lines in terms of simulation corresponding to the SE with CE performed according to \cite{Nadeem2020}   when $  v=10~\mathrm{km/h}$ and $  v=100~\mathrm{km/h} $. Although the CE in \cite{Nadeem2020} does not account for channel aging, the achievable sum SE is much lower than the proposed method due to its high training overhead.
\begin{figure}[!h]
	\begin{center}
		\includegraphics[width=0.9\linewidth]{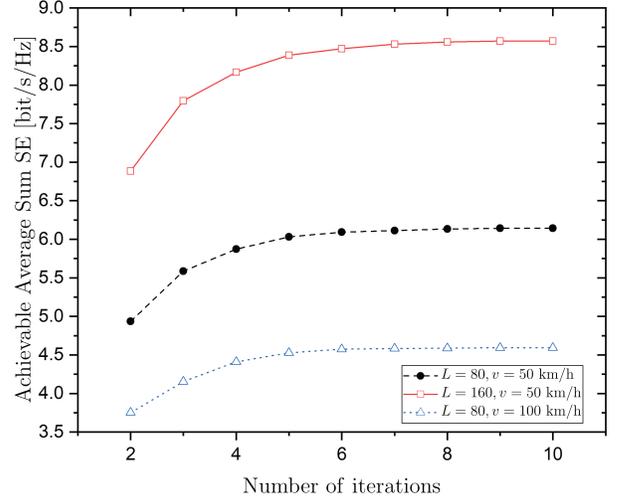}
		\caption{\footnotesize{Downlink achievable sum SE with RZF precoding of an RIS-assisted MIMO system with channel aging versus  the  number of iterations for different
				velocities $ v $ and number of RIS elements $ L $ ($ M=100 $, $ K=20 $)  (Analytical results and MC simulations).}}
		\label{Fig6}
	\end{center}
\end{figure}

In Fig. \ref{Fig6}, we depict the convergence of the proposed entire algorithm, which is founded on AO and includes the subproblems of RBM and transmit power optimizations. In particular, the horizontal axis corresponds to the number of iterations. As can be observed, the convergence is achieved quite fast for the cases under study, i.e., about $ 6 $ iterations are required. In the same figure, we have shown how the increase in the number of RIS elements and the channel aging affect the convergence. As $ L $ increases, more iterations are required to reach convergence because the number of optimization variables is larger. However, channel aging does not affect the rate of convergence but only the rate since the number of optimization variables  does not change.
\begin{figure}[!h]
	\begin{center}
		\includegraphics[width=0.9\linewidth]{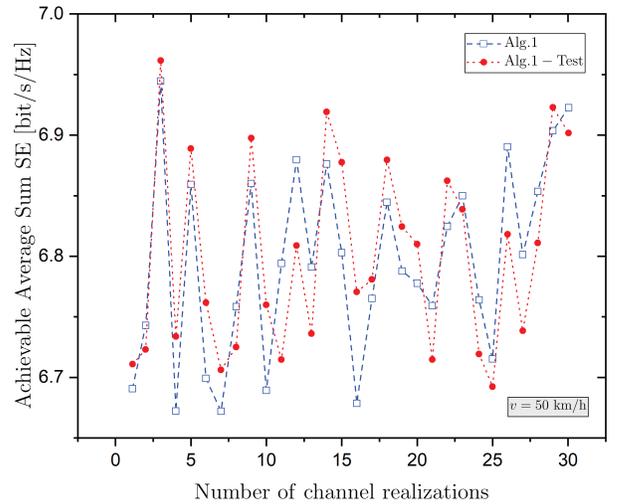}
		\caption{\footnotesize{Downlink achievable sum SE with RZF precoding of an RIS-assisted MIMO system with channel aging versus $ 30 $ channel realizations ($ M=100 $, $ L=100 $, $ K=20 $)  (Analytical results and MC simulations).}}
		\label{Fig7}
	\end{center}
\end{figure}

In addition, Fig. \ref{Fig7} assesses the dependence on the choice of the  initial points. Specifically,  given that the optimization problem \ref{Maximization3} is non-convex, its solution is contingent on the initialization. Hence, we consider  $30 $ channel realizations and $ v=50~\mathrm{km/h} $. The overall algorithm, consisted of Algorithms $ 1 $ and $ 2 $ is initialized with $ \bc_{0} =\exp\left(j\pi/2\right)\one_{L}$ and $ \bp=\frac{P_{\mathrm{max}}}{K}\one_{K}  $. The line corresponding to "Alg. 1-Test"  considers the best initial point out of $ 100 $ random initial points for each channel instance. As observed, different initial points lead to different solutions. However,  the sum SE in both cases is almost identical, which indicates that the selected initialization regarding the phase shifts and the transmit power  is a good choice. 
\section{Conclusion} \label{Conclusion} 
In this paper, we studied the impact of channel aging on RIS-assisted mMIMO systems by taking additionally into account the effects of spatial correlation and imperfect CSI. Having introduced channel aging and correlated Rayleigh fading not only during the data transmission phase but also in the uplink training phase, we obtained the effective channel estimates and derived the DE achievable sum SE with RZF in closed form. Next, we provided its maximization with respect to the RIS phase shifts and power budget constraints by applying an efficient AO algorithm based on statistical CSI that reduces  both the computational complexity and the feedback overhead. Hence, we illustrated the impact of channel aging and how its interconnection with other fundamental parameters affects performance. For instance, a suitable selection of the number of RIS elements and the length of the frame duration can mitigate channel aging. Notably, this work suggests interesting directions for future research with the study of wideband systems being of prominent significance. 
\begin{appendices}
	\section{Proof of Theorem~\ref{theorem:DLagedCSIRZF}}\label{theorem3}
	Regarding the  term in the desired signal power given by $ S_{k,n} $, we have
	\begin{align}
		\EE\{{\bh}_{k}^{\H}\bff_{k,n}\}&=	\al_{k,n-K} \EE\{(\hat{\bh}_{k}^{\H}+\tilde{\bh}_{k}^{\H}) \bSigma_{k}\hat{\bh}_{k}\}\label{ds0}\\
		&\asymp	\al_{k,n-K}\EE\left\{\frac{\hat{\bh}_{k}^{\H}	 \bSigma_{k}\hat{\bh}_{k}}{1+	\al_{k,n-K}^{2}\hat{\bh}_{k}^{\H} \bSigma_{k}\hat{\bh}_{k}}\right\}\label{d1}\\
		&\asymp	\al_{k,n-K}\frac{\frac{1}{M}\tr(\bPhi_{k} \bT )	 }{1+	\al_{k,n-K}^{2}\frac{1}{M}\tr(\bPhi_{k} \bT )}\label{d3}\\
		&=\frac{	\al_{k,n-K}\delta_{k}	 }{1+	\al_{k,n-K}^{2}\delta_{k}},\label{d2}
	\end{align}
	where $ {\bSigma}_k $ is defined as
	\begin{align}
		{\bSigma}_k &=\big(\al_{k,n-K}^{2}\hat{\bH}_{k}\hat{\bH}_{k}^{\H} -	\al_{k,n-K}^{2}\hat{\bh}_{k}\hat{\bh}_{k}^{\H}+ \bZ + \al M\Id_M\big)^{-1}.\nn
	\end{align}
	In \eqref{d1}, we have applied the matrix inversion lemma. In \eqref{d3}, we have used  \cite[Lem. 14.3]{Bai2010} known as rank-1 perturbation lemma, \cite[Lem. B.26]{Bai2010}, and \cite[Theorem 1]{Wagner2012}. Also, we have set $\delta_{k}= \frac{1}{M}\tr(\bPhi_{k} \bT )	 $.

	For the derivation of the DE of the normalization parameter $\lambda$, we focus on the denominator, and we result in
	\begin{align}
	&	\tr(\bP\hat{\bH}^{\H}\bSigma^{2}\hat{\bH})=\sum_{i=1}^{K}p_{i}\hat{\bh}_{i}^{\H}\bSigma^{2}\hat{\bh}_{i}  \\
		&= \frac{1}{M}\sum_{i=1}^{K}p_{i}\frac{\hat{\bh}_{i}^{\H}	 \bSigma_{i}^{2}\hat{\bh}_{i}}{(1+	\al_{i,n-K}^{2}\hat{\bh}_{i}^{\H} \bSigma_{i}\hat{\bh}_{i})^{2}}\label{lamda1}\\
		&\asymp \frac{1}{M}\sum_{i=1}^{K}p_{i}\frac{\frac{1}{M}\tr\big(\bPhi_{i} \tilde{\bT}(\Id_M)\big )	 }{(1+	\al_{i,n-K}^{2}\frac{1}{M}\tr(\bPhi_{i} \bT ))^{2}}\label{lamda2}\\
		&=\frac{1}{M}\sum_{i=1}^{K}p_{i}\frac{	\delta_{i}^{\lambda}	 }{(1+	\al_{i,n-K}^{2}\delta_{i})^{2}},\label{desired1}
	\end{align}
	where, in \eqref{lamda1}, we have applied the matrix inversion lemma twice. In \eqref{lamda2}, we have applied the rank-1 perturbation lemma, \cite[Lem. B.26]{Bai2010}, and \cite[Theorem 1]{Wagner2012} together with  \cite[Theorem 2]{Hoydis2013} for $\bK=\Id_M$. In \eqref{desired1}, we have denoted $ \delta_{i}^{\lambda}=\frac{1}{M}\tr\big(\bPhi_{i} \tilde{\bT}(\Id_M) \big)$.

	
	The term in $ I_{k,n} $, which includes the deviation from the average effective
	channel gain  is written as
	\begin{align}
		\mathrm{Var}\left\{{\bh}_{k}^{\H}\bff_{k,n}\right\}	&\asymp\frac{\al_{k,n-K}^{2}\EE\{|\tilde{\bh}_{k}^{\H}\bSigma_{k}\hat{\bh}_{k}|^{2}\}}{(1+\al_{k,n-K}^{2}\delta_{k})^{2}}\label{var1}\\
		&	\asymp \frac{\al_{k,n-K}^{2}\tr(\bSigma_{k}\bPhi_{k}\bSigma_{k}(\bR_{k}-\bPhi_{k}))}{(1+\al_{k,n-K}^{2}\delta_{k})^{2}}\label{var2}\\
		&	\asymp \frac{\frac{1}{M^{2}}\al_{k,n-K}^{2}\tr\big((\bR_{k}-\bPhi_{k})\tilde{\bT}(\bPhi_{k})\big)}{(1+\al_{k,n-K}^{2}\delta_{k})^{2}}\label{var3},
	\end{align}
	where in \eqref{var1}, we have applied the matrix inversion lemma, \cite[Lem. B.26]{Bai2010}, and \cite[Theorem 1]{Wagner2012}. Next step includes application of the rank-1 perturbation lemma, \cite[Lem. B.26]{Bai2010}, and \cite[Lem. 10]{Krishnan2015}. and \cite[Lem. B.26]{Bai2010} again. In   \eqref{var3}, we have used \cite[Theorem 2]{Hoydis2013}.
	
	Similar steps are followed to derive the DE of the term including the innovation error. Specifically, we have 
	\begin{align}
		\EE\left\{|	\tilde{\bee}_{k,n}^{\H}\bff_{k,n}|^{2}\right\}&\asymp\frac{\al_{k,n-K}^{2}\EE\{|\tilde{\bee}_{k,n}^{\H}\bSigma_{k}\hat{\bh}_{k}|^{2}\}}{(1+\al_{k,n-K}^{2}\delta_{k})^{2}}\label{error1}\\
		&	\asymp \frac{\al_{k,n-K}^{2}\tr(\bSigma_{k}\bPhi_{k}\bSigma_{k}\bR_{k})}{(1+\al_{k,n-K}^{2}\delta_{k})^{2}}\label{error2}\\
		&	\asymp \frac{\frac{1}{M^{2}}\al_{k,n-K}^{2}\tr\big(\bR_{k}\tilde{\bT}(\bPhi_{k})\big)}{(1+\al_{k,n-K}^{2}\delta_{k})^{2}}\label{error3}.
	\end{align}
	
	The second term, concerning the multi-user interference, is obtained as
	\begin{align}
		\EE\{|\bh^\H_{k,n}\bff_{i,n}|^{2}\}&=\al_{i,n-K}^{2}\EE\bigg\{\!\left	|\frac{\bh^\H_{k,n}\bSigma_{i}\hat{\bh}_{i}}{1+\al_{i,n-K}^{2}\hat{\bh}_{i}^{\H}\bSigma_{i}\hat{\bh}_{i}}\right|^{2}\!\bigg\}\label{int1}\\
		&=\al_{i,n-K}^{2}	\EE\bigg\{\!\frac{\bh^\H_{k,n}\bSigma_{i}\hat{\bh}_{i}\hat{\bh}_{i}^{\H}\bSigma_{i}\bh_{k,n}}{(1+\al_{i,n-K}^{2}\hat{\bh}_{i}^{\H}\bSigma_{i}\hat{\bh}_{i})^{2}}\!\bigg\}\label{int2}\\
		&\asymp\al_{i,n-K}^{2}	\EE\bigg\{\!\frac{\bh^\H_{k,n}\bSigma_{i}\bPhi_{i}\bSigma_{i}\bh_{k,n}}{(1+\al_{i,n-K}^{2}\delta_{i})^{2}}\!\bigg\},\label{int3}
	\end{align}
	where the matrix inversion lemma has been applied in \eqref{int1}. In \eqref{int3}, the mutual independence between $ \hat{\bh}_{i} $ and $\bh_{k,n}  $, the rank-1 perturbation lemma, \cite[Lem. B.26]{Bai2010}, and \cite[Theorem 1]{Wagner2012} are taken into account. However, $ \bSigma_{i} $ is not independent of $ \bh_{k,n} $. For this reason, application of \cite[Lemma~2]{Hoydis2013} gives
	\begin{align}
		\bSigma_i={\bSigma}_{ik}-\al_{k,n-K}^{2}\frac{{\bSigma}_{ik}\hat{\bh}_{k}\hat{\bh}_{k}^{\H}{\bSigma}_{ik}}{1+\al_{k,n-K}^{2}\hat{\bh}_{k}^{\H} {\bSigma}_{ik}\hat{\bh}_{k} },\label{eq:theorem2.I.51}
	\end{align}
	where the new matrix ${\bSigma}_{ik}$ is  defined as
	\begin{align}
		{\bSigma}_{ik}=&\big(\al_{k,n-K}^{2}\hat{\bH}_{k}\hat{\bH}_{k}^{\H} \!-\al_{k,n-K}^{2}\hat{\bh}_{k}\hat{\bh}_{k}^{\H}-\al_{i,n-K}^{2}\hat{\bh}_{i}\hat{\bh}_{i}^{\H}	\! \nn\\
		&+\bZ + M \al \Id_M \big)^{-1}.
	\end{align}
	
	By inserting \eqref{eq:theorem2.I.51} into \eqref{int3}, we result in 
	\begin{align}
		\EE\{|\bh^\H_{k,n}\bff_{i,n}|^{2}\}=\frac{Q_{ik}}{\big(1+\al_{i,n-K}^{2}{\delta_{i}}\big)^{2}},\label{eq:theorem2.I.6}
	\end{align}
	where $Q_{ik}$ is given by \eqref{eq:theorem2.I.mu1} at the top of the next page.
	\begin{figure*}
		\begin{align}
			\!\!\!\!\!\!\!\!\!\!\!\!Q_{ik}&\!= \!\bh^\H_{k,n}\bSigma_{ik} \bPhi_{i} \bSigma_{ik}  \bh_{k,n}\!+\!\al_{k,n-K}^{6}\frac{\big|  \bh^\H_{k,n}{\bSigma}_{ik} \hat{\bh}_{k}\big|^{2}\!\hat{\bh}^\H_{k}  {\bSigma}_{ik} \bPhi_{i}{\bSigma}_{ik}\hat{\bh}_{k}}{\big( 1+\al_{k,n-K}^{2}\hatvh^\H_{k} {\bSigma}_{ik} \hatvh_{k} \big)^{2}}\!-\!2\al_{k,n-K}^{4}\mathrm{Re}\!\left\{ \! \frac{\hatvh^\H_{k}{\bSigma}_{ik}\bh_{k,n}\bh_{k,n}^{\H}{\bSigma}_{ik}\bPhi_{i}{\bSigma}_{ik}\hatvh_{k}}{1+\al_{k,n-K}^{2}\hatvh^\H_{k} {\bSigma}_{ik} \hatvh_{k}}\!\right\}\!.
			\label{eq:theorem2.I.mu1}
		\end{align}
		\line(1,0){470}
	\end{figure*}
	In \eqref{eq:theorem2.I.6}, we have applied again the matrix inversion lemma, and in the last step, we have used the rank-1 perturbation lemma, \cite[Lem. B.26]{Bai2010}, \cite[Theorem 1]{Wagner2012}, and \cite[Theorem 2]{Hoydis2013}.  The DE of each term in~\eqref{eq:theorem2.I.mu1} is obtained as
	\begin{align}
		\bh^\H_{k,n}\bSigma_{ik} \bPhi_{i} \bSigma_{ik}  \bh_{k,n}&\asymp  \frac{1}{M^{2}}\tr \bR_{k}\tilde{\bT}(\bPhi_{i})=\zeta_{ki}\label{q1},\\
		\hat{\bh}^\H_{k}  {\bSigma}_{ik} \bPhi_{i}{\bSigma}_{ik}\hat{\bh}_{k}&\asymp  \frac{1}{M^{2}}\tr \bPhi_{k}\tilde{\bT}(\bPhi_{i})=\mu_{ki},\\
		\bh_{k,n}^{\H}{\bSigma}_{ik}\bPhi_{i}{\bSigma}_{ik}\hatvh_{k}&\asymp \al_{k,n-K} \frac{1}{M^{2}}\tr\bPhi_{k}\tilde{\bT}(\bPhi_{i}).\label{q3}
	\end{align}
	Hence, substitution of \eqref{q1}-\eqref{q3} into \eqref{eq:theorem2.I.mu1}  provides $ Q_{ik} $. 
	
	The definitions of the various parameters are given in the presentation of the theorem.  The DE SINR $\bar{\gamma}_{k,n} $ is obtained by substituting \eqref{d2}, \eqref{desired1}, \eqref{var3}, \eqref{error3}, and \eqref{eq:theorem2.I.6} into $ S_{k,n} $ and $ I_{k,n} $.
	
	\section{Proof of Proposition~\ref{Prop:optimPhase}}\label{optimPhase}
	For this proof, we are going to use the following lemma.
	\begin{lemma}\label{traceProd}
		Let 	 $ \bA \in\mathbb{C}^{M\times M} $ be independent of $ \bTheta$ and $\bR_{k}= \bH_{1} \bTheta$ $\bR_{\mathrm{RIS},k}\bTheta^{\H}\bH_{1}^{\H} $, then
		\begin{align}
			\tr\left( \!\!\bA\pdv{\bR_{k}}{\bc_{t}^{*}}\!\right) =\mu\diag\left(\bH_{1}^{\H}\bA\bH_{1} \bTheta\bR_{\mathrm{IRS},k}\right).
	\end{align}\end{lemma}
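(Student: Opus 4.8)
The plan is to collapse the $M\times M$ trace to an $L\times L$ expression in which the phase vector $\bc_{t}$ appears only through the diagonal matrix $\bTheta$ and its conjugate transpose, and then to differentiate entrywise using the Wirtinger calculus. First I would use the cyclic invariance of the trace together with the fact that $\bH_{1}$ is independent of $\bc_{t}$ to move the outer factors inside: setting $\bB\triangleq\bH_{1}^{\H}\bA\bH_{1}$ one has $\tr(\bA\bR_{k})=\tr(\bA\bH_{1}\bTheta\bR_{\mathrm{RIS},k}\bTheta^{\H}\bH_{1}^{\H})=\tr(\bB\bTheta\bR_{\mathrm{RIS},k}\bTheta^{\H})$, so the entire dependence on the phases is now carried by $\bTheta=\diag(\bc_{t})$ and $\bTheta^{\H}$.

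Next I would take the matrix differential $d\bR_{k}=\bH_{1}(d\bTheta)\bR_{\mathrm{RIS},k}\bTheta^{\H}\bH_{1}^{\H}+\bH_{1}\bTheta\bR_{\mathrm{RIS},k}(d\bTheta^{\H})\bH_{1}^{\H}$ and recall that, in the Wirtinger calculus, $\bc_{t}$ and $\bc_{t}^{*}$ are treated as independent. Since $\bTheta=\diag(\bc_{t})$ is holomorphic in $\bc_{t}$ while $\bTheta^{\H}=\diag(\bc_{t}^{*})$ carries the conjugate, the derivative with respect to $\bc_{t}^{*}$ annihilates the first term and keeps only the second, giving $d\tr(\bA\bR_{k})\big|_{\bc_{t}^{*}}=\tr\!\big(\bB\bTheta\bR_{\mathrm{RIS},k}\,\diag(d\bc_{t}^{*})\big)$. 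I would then invoke the elementary identity $\tr(\bM\diag(\bx))=\sum_{l}[\bM]_{ll}x_{l}=\diag(\bM)^{\T}\bx$ with $\bM=\bB\bTheta\bR_{\mathrm{RIS},k}$ to read off the gradient as $\diag(\bB\bTheta\bR_{\mathrm{RIS},k})=\diag(\bH_{1}^{\H}\bA\bH_{1}\bTheta\bR_{\mathrm{RIS},k})$; the common reflection amplitude $\mu$ (equal to one under the maximum-reflection assumption) simply multiplies $\diag(\bc_{t}^{*})$ and therefore factors out in front, yielding the claimed expression.

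The only delicate point, which I expect to be the main source of error rather than difficulty, is the bookkeeping: one must verify that exactly one of the two differential terms survives under $\partial/\partial\bc_{t}^{*}$ and that the diagonal extraction is applied to the product $\bB\bTheta\bR_{\mathrm{RIS},k}$ in the correct order. As a safeguard against a transposition slip I would also run the purely entrywise check, writing $\tr(\bB\bTheta\bR_{\mathrm{RIS},k}\bTheta^{\H})=\sum_{l,m}[\bB]_{lm}[\bc_{t}]_{m}[\bR_{\mathrm{RIS},k}]_{ml}[\bc_{t}^{*}]_{l}$ and differentiating with respect to $[\bc_{t}^{*}]_{n}$, which selects the terms with $l=n$ and returns $\sum_{m}[\bB]_{nm}[\bc_{t}]_{m}[\bR_{\mathrm{RIS},k}]_{mn}=[\bB\bTheta\bR_{\mathrm{RIS},k}]_{nn}$, in agreement with the differential computation and hence with the stated result.
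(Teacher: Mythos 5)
Your proposal is correct and follows essentially the same route as the paper: the paper's one-line proof likewise isolates the $\bTheta^{\H}=\mu\diag(\bc_{t}^{*})$ factor, rewrites the trace as the linear form $\left(\diag\left(\bH_{1}^{\H}\bA\bH_{1}\bTheta\bR_{\mathrm{RIS},k}\right)\right)^{\T}\bc_{t}^{*}$ via the identity $\tr\left(\bM\diag(\bx)\right)=\left(\diag(\bM)\right)^{\T}\bx$, and differentiates in the Wirtinger sense. Your version merely spells out the intermediate differential and the entrywise verification that the paper leaves implicit.
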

	\proof We have
	\begin{align}
		\!\tr\!\left(\! \!\bA\pdv{\bR_{k}}{\bc_{t}^{*}}\!\right) &\!=\!\pdv{\left(\diag\left(\bH_{1}^{\H}\bA\bH_{1} \bTheta\bR_{\mathrm{IRS},k}\right)\right)^{\T}\bc_{t}^{*}}{\bc_{t}^{*}}\\
		&= \mu\diag\left(\bH_{1}^{\H}\bA\bH_{1} \bTheta\bR_{\mathrm{IRS},k}\right),\label{deriv4}
	\end{align}
	where we have used the property $ \tr\left(\bA \diag(\bc_{t})\right)=\left(\diag(A)\right)^{\T}\bc_{t} $.
	\endproof
	The  gradient of $ \!\bar{\gamma}_{k,n}  $  with respect to $ \bc_{t}^{*} $ is written as
	\begin{align}
		\!\bar{\gamma}_{k,n}' =\frac{S_{k}'I_{k}-S_{k}I_{k}'}{I_{k}^{2}},\label{gam1}
	\end{align}
	where the quotient rule derivative was simply applied. We continue with the computation of the partial derivatives $ S_{k}' $ and $ I_{k}' $. Specifically, $ S_{g}' $ is obtained as
	\begin{align}
		S_{k}'= 2p_{k}\delta_{k}\delta_{k}', \label{sg1}
	\end{align}
	where 
	\begin{align}
		\delta_{k}'=\frac{1}{M}\tr\left(\bPhi_{k}'\bT+\bPhi_{k}\bT'\right).\label{sg2}
	\end{align} 
	Note that $ \bPhi_{k}' $ is the derivative of $ \bPhi_{k} $  with respect to $\bc_{t}^{*}  $ whose expression is given by
	\begin{align}
		\bPhi_{k}'=\al^{2}_{k,\zeta-n}\left(\bR_{k}' \bQ \bR_{k}+\bR_{k} \bQ' \bR_{k}+\bR_{k} \bQ \bR_{k}'\right).
	\end{align}
	The computation of the trace of this expression requires Lemma \ref{traceProd} and that the derivative of the  inverse matrix $ \bQ $ is obtained according to \cite[Eq. 40]{Petersen2012} as  $ \bQ'=-\bQ(\bQ^{-1})'\bQ$, where the derivative of $ \bQ^{-1} $ is written as
	\begin{align}
		(\bQ^{-1})'=\bR_{k}'.
	\end{align}
	Also, $ \bT' $ concerns the derivative of an inverse matrix written as $ \bT'=-\bT(\bT^{-1})'\bT$, where the derivative of $ \bT^{-1} $ is written as
	\begin{align}
		(\bT^{-1})'=\frac{1}{M}\displaystyle \sum_{i=1 }^{K}\frac{\bPhi_{i}'\left(1+\al_{i,n-K}^{2}\delta_{i}\right)-\al_{i,n-K}^{2}\bPhi_{i}\delta_{i}'}{(1+\al_{i,n-K}^{2}\delta_{i})^{2}},\label{sg3}
	\end{align}
	After inserting \eqref{sg2}-\eqref{sg3} into \eqref{sg1}, we obtain $S_{k}'  $. 
	
	The derivative of $ I_{k} $ is based on simple derivative rules and it requires the computation of the derivatives of $ \delta_{k} $, $ \tilde{\delta}_{k} $,	$ \delta_{k}^{\mathrm{e}} $, $ Q_{ik} $, and $ 	\bar{\lambda} $ as shown in \eqref{intDeriv}. Regarding $ \delta_{k}' $, it is given by \eqref{sg2}, while $ \tilde{\delta}_{k} $ and	$ \delta_{k}^{\mathrm{e}}$ are obtained similarly as
	\begin{align}
		\!\!\!\! \tilde{\delta}_{k}'&\!=\!\frac{1}{M}\tr\!\big(\!(\bR_{k}'-\bPhi_{k}')\tilde{\bT}(\bPhi_{k})\!+\!(\bR_{k}-\bPhi_{k})\tilde{\bT}'(\bPhi_{k})\!\big),\label{sg4}\\
		( \delta_{k}^{\mathrm{e}})'&\!=\!\frac{1}{M}\tr\big(\bR_{k}'\tilde{\bT}(\bPhi_{k})+\bR_{k}\tilde{\bT}'(\bPhi_{k})\big),\label{sg5}
	\end{align}
	where 
	\begin{align}
		\tilde{\bT}'(\bPhi_{k})&=\bT'\bPhi_{k}\bT+\bT\bPhi_{k}'\bT+\bT\bPhi_{k}\bT'\nn\\
		&+\frac{1}{M}\sum_{i=1}^{K}\frac{\tilde{\delta}_{i}'\bT\bPhi_{i}\bT+\tilde{\delta}_{i}(\bT'\bPhi_{k}\bT+\bT\bPhi_{k}'\bT+\bT\bPhi_{k}\bT')}{(1+\al_{i,n-K}^{2}\delta_{i})^{3}}\nn\\
		&-\frac{1}{M}\sum_{i=1}^{K}\frac{2\al_{i,n-K}^{2}\tilde{\delta}_{i}\bT\bPhi_{i}\bT \delta_{i}'}{(1+\al_{i,n-K}^{2}\delta_{i})^{}}.
	\end{align}
	
	The derivative of the normalization parameter is given by
	\begin{align}
		\bar{\lambda}'&=\frac{K \big( \frac{1}{M} \tr \left(\frac{\Zm}{M} +\al \Id_M\right) \tilde{\bT}'(\Id_{M})-\frac{1}{M}\tr\Tm'\big)}{ \big(\frac{1}{M}\tr\Tm - \frac{1}{M} \tr \left(\frac{\Zm}{M} +\al \Id_M\right) \tilde{\bT}(\Id_{M})\big)^{2}}.\label{lambdaderiv}
	\end{align}
	
	Moreover, the last required derivative, concerning $ Q_{ik} $, is obtained as
	\begin{align}
		Q_{ik}'&= \zeta_{ki}'\!+\!\frac{\delta_{k}\delta_{k}'\mu_{ki}+|\delta_{k}|^{2}\mu_{ki}}{\big(1\!+\!\al_{k,n-K}^{2}\delta_{k}\big)^{\!2}}\!+\!\frac{2\al_{k,n-K}^{2}|\delta_{k}|^{2}\mu_{ki}\delta_{k}'}{\big(1\!+\!\al_{k,n-K}^{2}\delta_{k}\big)^{\!3}}\nn\\
		&-\!2\mathrm{Re}\left\{\!\!
		\frac{ (\delta_{k}^{*})'\mu_{ki}+\delta_{k}^{*}\mu_{ki}'}{1\!+\!\al_{k,n-K}^{2}\delta_{k}}+\!\!\!
		\frac{ \al_{k,n-K}^{2}\delta_{k}^{*}\mu_{ki}\delta_{k}'}{\big(1\!+\!\al_{k,n-K}^{2}\delta_{k}\big)^{\!2}}\right\},\label{qik}
	\end{align}
	where 
	\begin{align}
		\zeta_{ki}'&=\frac{1}{M^{2}}\tr\big(\bR_{k}'\tilde{\bT}(\bPhi_{i})+\bR_{k}\tilde{\bT}'(\bPhi_{i})\big), \\
		\mu_{ki}'&=\frac{1}{M^{2}}\tr\big(\bPhi_{k}'\tilde{\bT}(\bPhi_{i})+\bPhi_{k}\tilde{\bT}'(\bPhi_{i})\big).
	\end{align}
	Having obtained \eqref{sg4}, \eqref{sg5}, \eqref{lambdaderiv}, and \eqref{qik}, the derivative of $ I_{k} $ is derived and the proof is concluded.
\end{appendices}
\bibliographystyle{IEEEtran}

\bibliography{mybib}

\end{document}